\title{Planar Octilinear Drawings with One Bend Per Edge}
\author[1]{Michael A. Bekos%
\thanks{Electronic address: \texttt{bekos@informatik.uni-tuebingen.de}}}
\author[2]{Martin~Gronemann%
\thanks{Electronic address: \texttt{gronemann@informatik.uni-koeln.de}}}
\author[1]{Michael~Kaufmann%
\thanks{Electronic address: \texttt{mk@informatik.uni-tuebingen.de}}}
\author[1]{Robert~Krug%
\thanks{Electronic address: \texttt{krug@informatik.uni-tuebingen.de}}}
\affil[1]{Wilhelm-Schickhard-Institut f\"ur Informatik, Universit\"at T\"ubingen, Germany}
\affil[2]{Institut f\"ur Informatik, Universit\"at zu K\"oln, Germany}
\date{}
\newtheorem{lemma}{Lemma}
\newtheorem{theorem}{Theorem}
\newtheorem{definition}{Definition}
\newcommand{\skel}[1]{G^\textit{skel}_{#1}}
\newcommand{\Vpert}[1]{V^\textit{pert}_{#1}}
\newcommand{\Epert}[1]{E^\textit{pert}_{#1}}
\newcommand{\Vskel}[1]{V^\textit{skel}_{#1}}
\newcommand{\Eskel}[1]{E^\textit{skel}_{#1}}
\newcommand{\pert}[1]{G^\textit{pert}_{#1}}
\newcommand{\gdeg}[1]{\textit{deg}(#1)}
\newcommand{\pdeg}[2]{\textit{deg}^{\textit{pert}}_{#1}(#2)}
\newcommand{\sdeg}[2]{\textit{deg}^{\textit{skel}}_{#1}(#2)}
\newcommand{\redge}[1]{\textit{ref}(#1)}
\newcommand{\poles}[1]{\mathcal{P}_{#1}}
\newcommand{\IPGeo}{IP-\ref{ip:1}\xspace}
\newcommand{\IPFix}{IP-\ref{ip:2}\xspace}
\newcommand{\IPPort}{IP-\ref{ip:3}\xspace}
\begin{document}
\maketitle

\begin{abstract}
In \emph{octilinear drawings} of planar graphs, every edge is drawn
as an alternating sequence of horizontal, vertical and diagonal
($45^\circ$) line-segments. In this paper, we study octilinear
drawings of low edge complexity, i.e., with few bends per edge. A
$k$-planar graph is a planar graph in which each vertex has degree
less or equal to $k$. In particular, we prove that every 4-planar
graph admits a planar octilinear drawing with at most one bend per
edge on an integer grid of size $O(n^2) \times O(n)$. For 5-planar
graphs, we prove that one bend per edge still suffices in order to
construct planar octilinear drawings, but in super-polynomial area.
However, for 6-planar graphs we give a class of graphs whose planar
octilinear drawings require at least two bends per edge.
\end{abstract}

\section{Motivation}
\label{sec:introduction}

Drawing edges as octilinear paths plays a central role in the design
of metro-maps (see e.g., \cite{HMN06,NW11,SROW11}), which dates back
to the 1930's when Henry Beck, an engineering draftsman, designed
the first schematic map of London Underground using mostly
horizontal, vertical and diagonal segments; see
Fig.\ref{fig:beck1933}. Laying out networks in such a way is called
\emph{octilinear graph drawing}. More precisely, an \emph{octilinear
drawing} of a (planar) graph $G=(V,E)$ of maximum degree eight is a
(planar) drawing $\Gamma(G)$ of $G$ in which each vertex occupies a
point on the integer grid and each edge is drawn as a sequence of
alternating horizontal, vertical and diagonal ($45^\circ$)
line-segments. For an example, see Fig.\ref{fig:4p_example_large} in
Section~\ref{sec:sample}.

\begin{figure}[t]
    \centering
    \includegraphics[width=.7\textwidth]{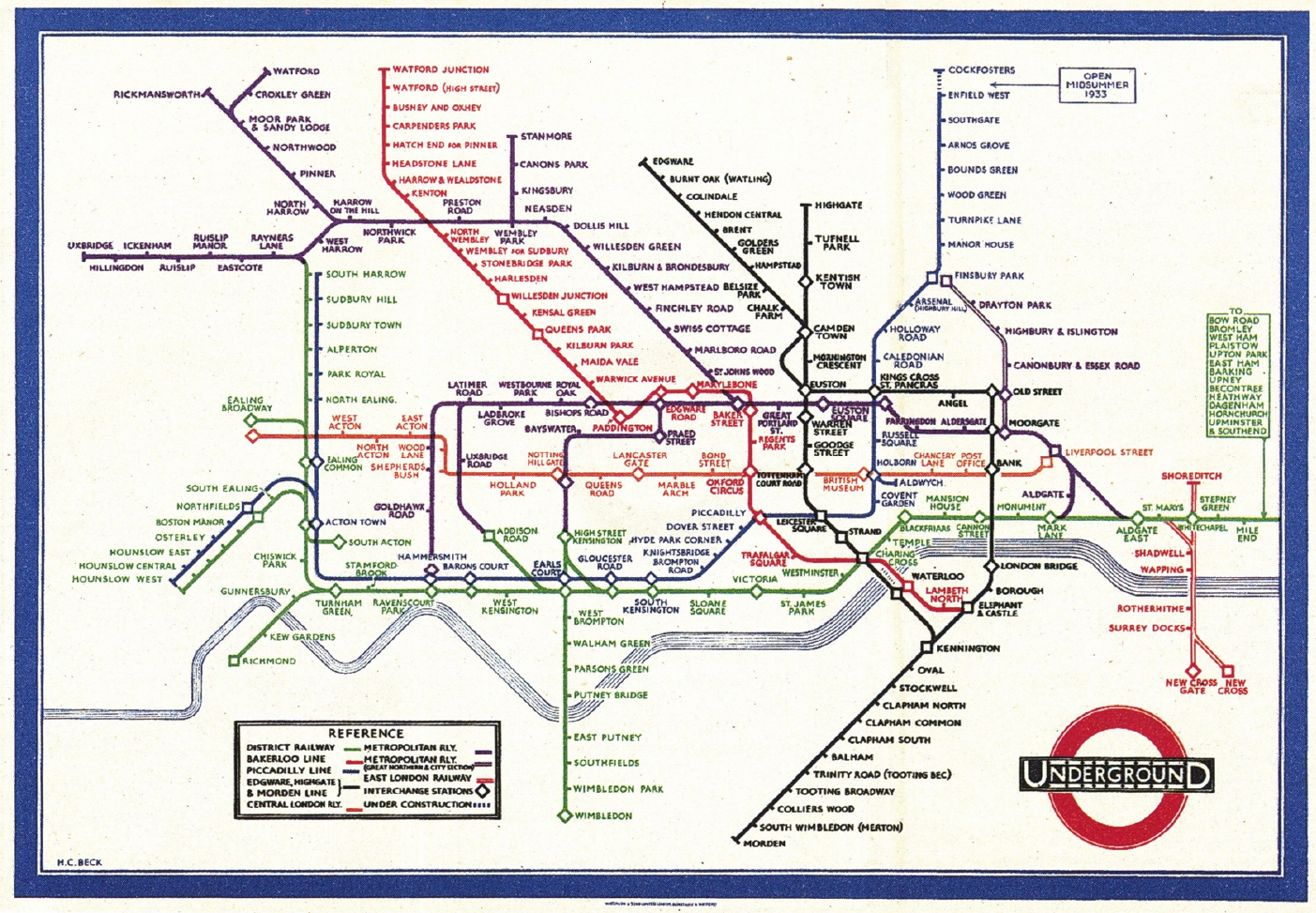}
    \caption{Henry~Beck Tube Map (first edition), 1933. Printed at Waterlow \& Sons Ltd., London.}
    \label{fig:beck1933}
\end{figure}

For drawings of (planar) graphs to be readable, special care is
needed to keep the number of bends small. However, the problem of
determining whether a given embedded 8-planar graph (that is, a
planar graph of maximum degree eight with given combinatorial
embedding) admits a bendless octilinear drawing is NP-hard
\cite{Noellenburg05}. This negative result motivated us to study
octilinear drawings of low \emph{edge complexity}, that is, with few
bends per edge. Surprisingly enough, very few results relevant to
this problem were known, even if the octilinear model has been
well-studied in the context of metro-map visualization and map
schematization (see e.g.~\cite{Wolff13}). As an immediate byproduct
of a result of Keszegh et al.~\cite{KPP13}, it turns out that every
$d$-planar graph, with $3 \leq d \leq 8$, admits a planar octilinear
drawing with at most two bends per edge; see
Section~\ref{sec:relatedwork}. On the other hand, every 3-planar
graph on five or more vertices admits a planar octilinear drawing in
which all edges are bendless~\cite{GLM14,Kant92}.

In this paper, we bridge the gap between the two aforementioned
results. In particular, we prove that every 4-planar graph admits a
planar octilinear drawing with at most one bend per edge in cubic
area (see Section~\ref{sec:4planar}). We further show that every
5-planar graph also admits a planar octilinear drawing with at most
one bend per edge, but our construction may require super-polynomial
area (see Section~\ref{sec:5planar}). Hence, we have made no effort
in proving a concrete bound. We complement our results by
demonstrating an infinite class of 6-planar graphs whose planar
octilinear drawings require at least two bends per edge (see
Section~\ref{sec:6planar}).

\section{Related Work}
\label{sec:relatedwork}

The research on the (planar) \emph{slope number of graphs} focuses
on minimizing the number of used slopes (see e.g.,
\cite{JJ13,KPP13,KPPT08,LLMN13,MP11}). Octilinear drawings can be
seen as a special case thereof, since only four slopes (horizontal,
vertical and the two diagonals) are used. In a related work, Keszegh
et al.~\cite{KPP13} showed that any $d$-planar graph admits a planar
drawing with one bend per edge, in which all edge-segments have at
most $2d$ different slopes. So, for $d=4$ and for $d=5$, we
significantly reduce the number of different slopes from $8$ and
$10$, resp., to $4$. They also proved that $d$-planar graphs, with
$d \geq 3$, admit planar drawings with two bends per edge that
require at most $\lceil\frac{d}{2}\rceil$ different slopes. It is
not difficult to transfer this technique to the octilinear model and
show that any $d$-planar graph, with $3 \leq d \leq 8$, admits a
planar octilinear drawing with two bends per edge. However, for
$d=3$, Di~Giacomo et al.~\cite{GLM14} recently proved that any
3-planar graph with $n \geq 5$ vertices has a bendless planar
drawing with at most $4$ different slopes and angular resolution
$\pi/4$ (see also~\cite{Kant92}); their approach also yields
octilinear drawings.

Octilinear drawings can be considered as an extension of
\emph{orthogonal drawings}, which allow only horizontal and vertical
segments (i.e., graphs of maximum degree $4$ admit such drawings).
Tamassia~\cite{Tamassia87} showed that one can minimize the total
number of bends in orthogonal drawings of embedded 4-planar graphs.
However, minimizing the number of bends over all embeddings of a
4-planar graph is NP-hard~\cite{GT01}. Note that the core of
Tamassia's approach is a min-cost flow algorithm that first
specifies the angles and the bends of the drawing, producing an
\emph{orthogonal representation}, and then based on this
representation computes the actual drawing by specifying the exact
coordinates for the vertices and the bends of the edges. It is known
that Tamassia's algorithm can be employed to produce a bend-minimum
octilinear representation for any given embedded 8-planar graph.
However, a bend-minimum octilinear representation may not be
realizable by a corresponding planar octilinear drawing \cite{BT04}.
Furthermore, the number of bends on a single edge might be very
high, but can easily be bounded by applying appropriate capacity
constraints to the flow-network.

Biedl and Kant~\cite{BK94} showed that any 4-planar graph except the
octahedron admits a planar orthogonal drawing with at most two bends
per edge on an $O(n^2)$ integer grid. Hence, the octilinear drawing
model allows us to reduce the number of bends per edge at the cost
of an increased area. On the other hand, not all 4-planar graphs
admit orthogonal drawings with one bend per edge; however, testing
whether a 4-planar graph admits such a drawing can be done in
polynomial time~\cite{BKRW14}. In the context of metro-map
visualization, several approaches have been proposed to produce
metro-maps using octilinear or nearly-octilinear polylines, such as
force-driven algorithms~\cite{HMN06}, hill climbing multi-criteria
optimization techniques~\cite{SROW11} and mixed-integer linear
programs~\cite{NW11}. However, the problem of laying out a metro-map
in an octilinear fashion is significantly more difficult than the
octilinear graph drawing problem, as several metro-lines may connect
the same pair of stations and the positions of the vertices have to
reflect geographical coordinates of the stations.

\section{Preliminaries}
\label{sec:preliminaries}

In our algorithms, we incrementally construct the drawings similar
to the method of Kant~\cite{Kant92b}. We first employ the canonical
order to cope with triconnected graphs. Then, we extend them to
biconnected graphs using the SPQR-tree~\cite{BT90} and to simply connected
graphs using the BC-tree. In this section we briefly recall them;
however we assume basic familiarity.

\begin{definition}[Canonical order~\cite{Kant92b}]
For a given triconnected plane graph $G=(V,E)$ let $\Pi =
(P_0,\ldots,P_m)$ be a partition of $V$ into paths such that $P_0 =
\{v_1,v_2\}$, $P_m=\{v_n\}$ and $v_2 \rightarrow v_1 \rightarrow
v_n$ is a path on the outer face of $G$. For $k=0,\ldots,m$ let
$G_k$ be the subgraph induced by $\cup_{i=0}^k P_i$ and assume it 
inherits its embedding from $G$. Partition $\Pi$ is a canonical
order of $G$ if for each $k=1,\ldots,m-1$ the following hold:
\begin{inparaenum}[(i)]
\item $G_k$ is biconnected,
\item all neighbors of $P_k$ in $G_{k-1}$ are on the outer face,
of $G_{k-1}$
\item all vertices of $P_k$ have at least one neighbor in $P_j$
for some $j > k$.
\end{inparaenum}
$P_k$ is called a singleton if $|P_k|=1$ and a chain otherwise.
\end{definition}

\begin{definition}[BC-tree]
The \emph{BC-tree} $\mathcal{B}$ of a connected graph $G$ has a
B-node for each biconnected component of $G$ and a C-node for each
cutvertex of $G$. Each B-node is connected with the C-nodes that are
part of its biconnected component.
\end{definition}

An SPQR-tree~\cite{BT90} provides information about the
decomposition of a biconnected graph into its triconnected
components. It can be computed in linear time and space~\cite{GM00}.
Every triconnected component is associated with a node
$\mu$ in the SPQR-tree $\mathcal{T}$. The triconnected component
itself is referred to as the \emph{skeleton} of $\mu$, denoted by
$\skel{\mu} = (\Vskel{\mu},\Eskel{\mu})$. We refer to the degree of
a vertex $v \in \Vskel{\mu}$ in $\skel{\mu}$ as $\sdeg{\mu}{v}$. We
say that $\mu$ is an \emph{R-node}, if $\skel{\mu}$ is a simple
triconnected graph. A bundle of at least three parallel edges
classifies $\mu$ as a \emph{P-node}, while a simple cycle of length
at least three classifies $\mu$ as an \emph{S-node}. By construction
R-nodes are the only nodes of the same type that are allowed to be
adjacent in $\mathcal{T}$. The leaves of $\mathcal{T}$ are formed by
the \emph{Q-nodes}. Their skeleton consists of two parallel edges;
one of them corresponds to an edge of $G$ and is referred to as
\emph{real edge}. The skeleton edges that are not real are referred
to as \emph{virtual edges}. A virtual edge $e$ in $\skel{\mu}$
corresponds to a tree node $\mu'$ that is adjacent to $\mu$ in
$\mathcal{T}$, more exactly, to another virtual edge $e'$ in
$\skel{\mu'}$. We assume that $\mathcal{T}$ is rooted at a Q-node.
Hence, every skeleton (except the one of the root) contains exactly
one virtual edge $e = (s, t)$ that has a counterpart in the skeleton
of the parent node. We call this edge the \emph{reference edge} of
$\mu$ denoted by $\redge{\mu}$. Its endpoints, $s$ and $t$, are
named the \emph{poles} of $\mu$ denoted by $\mathcal{P}_\mu =
\{s,t\}$. Every subtree rooted at a node $\mu$ of $\mathcal{T}$
induces a subgraph of $G$ called the \emph{pertinent graph} of $\mu$
that we denote by $\pert{\mu} = (\Vpert{\mu}, \Epert{\mu})$. We
abbreviate the degree of a node $v$ in $\pert{\mu}$ with
$\pdeg{\mu}{v}$. The pertinent graph is the subgraph of $G$ for
which the subtree describes the decomposition.

Now, assume that $G$ be a simple, biconnected $k$-planar graph,
whose SPQR-tree $\mathcal{T}$ is given. Additionally, we may assume
that $\mathcal{T}$ is rooted at a Q-node that is adjacent to an S-
or R-node. Notice that at least one such node exists since the graph
does not contain any multi-edges, which would be the case if only a
P-node existed. Biconnectivity and maximum degree of $k$ yield basic
bounds for the graph degree of a node $v \in V$, i.e., $2 \leq
\gdeg{v} \leq k$. By construction the pertinent graph of a tree node
$\mu$ is a (connected) subgraph of $G$; thus $1 \leq \pdeg{\mu}{v}
\leq \gdeg{v}$. For the degrees in a skeleton graph $\skel{\mu}$, we
obtain bounds based on the type of the corresponding node. Skeletons
of Q-nodes are cycles of length two, whereas S-nodes are by
definition simple cycles of length at least three; hence,
$\sdeg{\mu}{v} = 2$. For P- and R-nodes the degree can be bounded by
$3 \leq \sdeg{\mu}{v} \leq k$, since the skeleton of the former is
at least a bundle of three parallel virtual edges and the latter's
skeleton is triconnected by definition. The upper bound is derived
from the relation between skeleton and graph degrees: A virtual edge
$e= (s, t)$ hides at least one incident edge of each node (not
necessarily an $(s, t)$-edge). This observation can be easily proven
by induction on the tree. Hence, $2 \leq \sdeg{\mu}{v} \leq
\gdeg{v}$.

Next, we use this observation to derive bounds for the pertinent
degree by distinguishing two cases depending on whether $v$ is a
pole or not. Recall that $\pert{\mu}$ is a subgraph of $G$ that is
obtained by recursively replacing virtual edges by the skeletons of
the corresponding children. In the first case when $v$ is an
internal node in $\pert{\mu}$, i.e., $v \notin \poles{\mu}$, $v$ is
not incident to the reference edge in $\skel{\mu}$. Thus, every edge
of $G$ hidden by a virtual edge in $\skel{\mu}$ is in $\pert{\mu}$.
Hence, $\sdeg{\mu}{v} \leq \pdeg{\mu}{v} \leq k$. In the other case,
i.e., $v \in \mathcal{P}_\mu$, at least one edge that is hidden by
the reference edge, is not part of $\pert{\mu}$, thus,
$\sdeg{\mu}{v} - 1 \leq \pdeg{\mu}{v} \leq k - 1$. Notice that the
lower bounds depend on the skeleton degree which in turn depends on
the type of node, unlike the upper bounds that hold for all tree
nodes. The next lemma tightens these bounds based on the type of the
parent node.

\begin{lemma}
Let $\mu$ be a tree node that is not the root in the SPQR-tree
$\mathcal{T}$ of a simple, biconnected, $k$-planar graph $G$ and
$\mu'$ its parent in $\mathcal{T}$. For $v \in \poles{\mu}$, it
holds that $\pdeg{\mu}{v} \leq k-2$, if $\mu'$ is a P- or an R-node
and $\pdeg{\mu}{v} \leq k-1$ otherwise, i.e. $\mu'$ is an S- or a
Q-node. \label{lem:pdeg_bounds}
\end{lemma}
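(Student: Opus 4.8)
The plan is to exploit the relation between skeleton degrees and pertinent degrees established in the paragraph preceding the lemma, applied to the \emph{parent} skeleton $\skel{\mu'}$ rather than to $\skel{\mu}$. The key observation is this: the reference edge $\redge{\mu} = (s,t)$ of $\mu$ is exactly the virtual edge in $\skel{\mu'}$ whose child is $\mu$; the pertinent graph $\pert{\mu}$ is obtained by recursively expanding that virtual edge, and it ``hides'' (at least) one edge incident to $v$ in $G$ for \emph{each} of the other edges incident to $v$ in $\skel{\mu'}$ — since by the cited observation every virtual edge incident to $v$ in $\skel{\mu'}$ hides at least one $G$-edge at $v$, and a real edge incident to $v$ in $\skel{\mu'}$ is itself a $G$-edge at $v$ not in $\pert{\mu}$. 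Consequently $\pdeg{\mu}{v} \le \gdeg{v} - (\sdeg{\mu'}{v} - 1) \le k - (\sdeg{\mu'}{v} - 1)$.

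First I would set up this inequality carefully: $v$ is incident in $\skel{\mu'}$ to $\sdeg{\mu'}{v}$ edges, one of which is $\redge{\mu}$; each of the remaining $\sdeg{\mu'}{v} - 1$ edges contributes at least one distinct edge of $G$ at $v$ that lies outside $\pert{\mu}$ (distinctness follows because distinct skeleton edges expand into edge-disjoint subgraphs). Hence at least $\sdeg{\mu'}{v} - 1$ of the $\gdeg{v} \le k$ edges incident to $v$ in $G$ are missing from $\pert{\mu}$, giving $\pdeg{\mu}{v} \le k - \sdeg{\mu'}{v} + 1$. Then I would plug in the type-dependent lower bounds on $\sdeg{\mu'}{v}$ recalled just before the lemma: if $\mu'$ is a P- or R-node then $\sdeg{\mu'}{v} \ge 3$, yielding $\pdeg{\mu}{v} \le k-2$; if $\mu'$ is an S- or Q-node then $\sdeg{\mu'}{v} = 2$, yielding $\pdeg{\mu}{v} \le k-1$, which is exactly the two claimed bounds.

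The one subtlety to pin down — the step I expect to be the main obstacle — is justifying that the $\sdeg{\mu'}{v}-1$ non-reference skeleton edges at $v$ really do account for $\sdeg{\mu'}{v}-1$ \emph{pairwise distinct} $G$-edges at $v$ that are \emph{absent} from $\pert{\mu}$. Absence is clear, since $\pert{\mu}$ is precisely the expansion of $\redge{\mu}$ and the SPQR decomposition glues children along distinct virtual edges; distinctness is the ``every virtual edge hides at least one incident edge of each node'' observation from the excerpt, used once per non-reference virtual edge at $v$, together with the trivial fact that a real edge at $v$ in $\skel{\mu'}$ is itself such a $G$-edge. I would spell this out by the same induction on $\mathcal{T}$ that the excerpt invokes for the underlying observation, or simply cite that observation directly. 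Once distinctness and absence are in hand, the arithmetic is immediate and the lemma follows.
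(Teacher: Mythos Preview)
Your proposal is correct and follows essentially the same approach as the paper: both argue in the parent skeleton $\skel{\mu'}$, use that each of the $\sdeg{\mu'}{v}$ skeleton edges at $v$ hides at least one distinct $G$-edge incident to $v$, and conclude that the single virtual edge corresponding to $\mu$ can account for at most $k-(\sdeg{\mu'}{v}-1)$ of them. Your version is slightly more explicit in stating the inequality $\pdeg{\mu}{v}\le k-\sdeg{\mu'}{v}+1$ and in flagging the distinctness/edge-disjointness point, which the paper leaves implicit.
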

\begin{proof}
Since the case where $\mu'$ is an S- or a Q-node follows from the
fact that $G$ is k-planar and the reference edge hides at least
one edge that is not in $\pert{\mu}$, we restrict ourselves to the
more interesting cases where $\mu'$ is either a P- or an R-node.
From our previous observations we know that $3 \leq \sdeg{\mu'}{v}
\leq k$. Each of the at least three edges in $\skel{\mu'}$ hides at
least one edge of $G$ that is incident to $v$. However, the total
number of edges is at most $k$ due to the degree restriction. Hence,
we are left with the problem of $k$ edges of $G$ being hidden by at
least three virtual edges, each hiding at least one. As a result the
virtual edge that corresponds to $\mu$ cannot contribute more than
$k-2$ edges to its pertinent graph $\pert{\mu}$.
\end{proof}

\begin{lemma}
In the SPQR-tree $\mathcal{T}$ of a planar biconnected graph $G =
(V,E)$ with $\deg(v) \geq 3$ for every $v \in V$, there exists at
least one Q-node that is adjacent to a P- or an R-node.
\label{lem:pr_node}
\end{lemma}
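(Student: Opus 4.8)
The plan is to root the SPQR-tree $\mathcal{T}$ at a Q-node and consider the node types adjacent to Q-nodes. Since $\mathcal{T}$ has at least one Q-node (every real edge of $G$ gives one), the issue is only whether every Q-node is forced to be adjacent to an S-node. Suppose for contradiction that no Q-node is adjacent to a P- or R-node; then every Q-node is adjacent to an S-node, and I will derive a contradiction with the degree hypothesis $\deg(v)\ge 3$.

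First I would observe that if $\mathcal{T}$ consists of a single non-Q-node (plus its pendant Q-nodes), that node cannot be a P-node, because $G$ is simple: a P-node skeleton is a bundle of $\ge 3$ parallel edges, and with all neighbors being Q-nodes this would make $G$ a multigraph between two vertices. It also cannot be an S-node whose every skeleton edge is real, since then $G$ is a simple cycle and every vertex has degree $2$, contradicting $\deg(v)\ge 3$. Hence in the degenerate case the central node is an R-node, and we are done. So assume $\mathcal{T}$ has at least two non-Q-nodes.

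Next I would use a counting/parity argument on the S-nodes. Under the contradiction hypothesis, consider any S-node $\sigma$ whose skeleton is a cycle $C$ of length $\ell\ge 3$. Each edge of $C$ is either real (a Q-node child) or virtual, corresponding to a neighboring non-Q-node, which by hypothesis cannot be a Q-node's partner in the forbidden sense — more usefully, a virtual edge of an S-node skeleton cannot lead to another S-node (S-nodes of the same type are not adjacent in $\mathcal{T}$), so it leads to a P- or R-node $\nu$. The key point I want to extract is: if \emph{every} Q-node is adjacent to an S-node, then trace a path in $\mathcal{T}$ from any Q-node; it enters an S-node, and any virtual edge of that S-node must go to a P- or R-node $\nu$, and $\nu$ in turn has at least three skeleton edges, at least one of which (being virtual or real) must eventually resolve to a real edge of $G$, i.e.\ to some Q-node $q$; but the path from $\nu$ to $q$ cannot pass through another P/R adjacency to a Q-node by hypothesis, forcing $q$ to hang off an S-node that is a \emph{child} of $\nu$ — and then that child S-node is adjacent to the P/R node $\nu$, which is fine, but I then chase the degree of the poles of that S-node using Lemma~\ref{lem:pdeg_bounds}.

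The cleanest route, and the one I would write up, is this: pick a Q-node $q$ adjacent to an S-node $\sigma$; let $v$ be a pole of $q$ (an endpoint of the corresponding real edge). Since $\deg(v)\ge 3$ and $v$ lies on the length-$\ge 3$ cycle of $\sigma$ with skeleton degree $2$, $v$ must be incident in $G$ to an edge hidden by the \emph{other} skeleton edge of $\sigma$ at $v$, or to an edge outside $\pert{\sigma}$ via the reference edge; unwinding this with the pertinent-degree bounds from the preliminaries shows that somewhere in $\mathcal{T}$ a vertex must acquire its third incident edge through a P- or R-node skeleton, and the Q-node realizing that edge is then adjacent, along the tree, to that P- or R-node — because between a P/R node and the Q-node for one of its hidden edges there is no room for an S-node unless that S-node itself is the child, but an S-node child of a P/R node has that P/R node as its parent, contradicting the standing assumption only if we instead look at the Q-node children of the P/R node directly. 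Concretely: the vertex of maximum graph degree, or any vertex $v$ with $\deg(v)\ge 3$, forces the existence of a P- or R-node, and among the skeleton edges incident to the poles of that node, at least one bottoms out at a Q-node whose unique non-Q neighbor is exactly that P- or R-node, since a virtual edge incident to a pole of a P/R node that hid only a single edge of $G$ resolves directly to the Q-node of that edge. The main obstacle is making this ``bottoming out directly at a Q-node'' rigorous: one needs that a P- or R-node must have at least one skeleton edge that is \emph{real} (a Q-node), and this is where I would invoke that $G$ is simple and $\deg(v)\ge 3$ together with Lemma~\ref{lem:pdeg_bounds} to preclude every skeleton edge of the P/R node being virtual with the wrong child types, completing the contradiction.
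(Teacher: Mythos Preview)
Your proposal correctly sets up the contradiction hypothesis and handles the degenerate cases, but the main argument never closes. You yourself name the obstacle: you need that \emph{some} P- or R-node has a real (Q-node) skeleton edge, and you say you would ``invoke that $G$ is simple and $\deg(v)\ge 3$ together with Lemma~\ref{lem:pdeg_bounds}'' to get this. But that is precisely the statement of the lemma, and nothing in Lemma~\ref{lem:pdeg_bounds} (which only bounds pertinent degrees of poles) forces a P- or R-skeleton to contain a real edge. Under your contradiction hypothesis, every P- and R-node could in principle have only S-node (or, for R, also P/R-node) children, and your chase of ``where the third edge of $v$ comes from'' just pushes the problem from one node to a neighboring one without terminating. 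The paragraph that begins ``The cleanest route'' never produces a vertex of degree~$\le 2$ or any other concrete contradiction; it restates the goal in different words.

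The paper's proof avoids this circularity with a minimality argument you do not use. Root $\mathcal{T}$ at any Q-node (adjacent to an S-node by hypothesis) and pick an S-node $\mu$ that is \emph{lowest}, i.e.\ whose subtree contains no other S-node. Since an S-skeleton is a cycle of length $\ge 3$, $\mu$ has an internal skeleton vertex $v\notin\mathcal{P}_\mu$; if all children of $\mu$ were Q-nodes, then $\deg(v)=2$, contradicting the hypothesis. Hence $\mu$ has a child $\mu'$ that is a P- or R-node. But under the contradiction hypothesis $\mu'$ has no Q-node children, and since the leaves of $\mathcal{T}$ are Q-nodes, the subtree below $\mu'$ must contain another S-node---contradicting the minimal choice of $\mu$. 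The key idea you are missing is this ``deepest S-node'' extremal choice, which turns the infinite chase into a two-step contradiction.
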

\begin{proof}
Assume to the contrary that all Q-nodes are adjacent to S-nodes
only. We pick such a Q-node and root $\mathcal{T}$ at it. Let $\mu$
be an S-node (possibly the root itself) with poles $\mathcal{P}_\mu
= \{ s, t \}$ such that there is no other S-node in the subtree of
$\mu$. By definition of an S-node, $\mu$ has at least two children.
If all of them were Q-nodes then there exists a $v \in \Vskel{\mu}$
with $s \neq v \neq t$ and $\deg(v) = 2$; a contradiction. Hence,
there is at least one child $\mu'$ that is a P- or an R-node.
However, since the leaves of $\mathcal{T}$ are Q-nodes and those are
not allowed to be children of P- and R-nodes by our assumption,
there exists at least one other S-node in the subtree of $\mu'$ and
therefore in the subtree of $\mu$ which contradicts our choice of
$\mu$.
\end{proof}

\section{Octilinear Drawings of 4-Planar Graphs}
\label{sec:4planar}

In this section, we focus on planar octilinear drawings of 4-planar
graphs. We first consider the case of triconnected 4-planar graphs
and then we extend our approach first to biconnected and then to
simply connected graphs. Central in our approach is the port
assignment; by the \emph{port} of a vertex we refer to the side of
the vertex an edge is connected to. The different ports on a vertex
are distinguished by the cardinal directions.

\subsection{The Triconnected Case}
\label{sec:4tricon}

Let $G=(V,E)$ be a triconnected 4-planar graph and $\Pi = \{ P_0,
\ldots, P_m\}$ be a canonical order of $G$. We momentarily neglect
the edge $(v_1,v_2)$ of the first partition $P_0$ of $\Pi$ and we
start by placing the second partition, say a chain $P_1 = \{ v_3,
\ldots, v_{|P_1|+2} \}$, on a horizontal line from left to right.
Since by definition of $\Pi$, $v_3$ and $v_{|P_1|+2}$ are adjacent
to the two vertices, $v_1$ and $v_2$, of the first partition $P_0$,
we place $v_1$ to the left of $v_3$ and $v_2$ to the right of
$v_{|P_1|+2}$. So, they form a single chain where all edges are
drawn using horizontal line-segments that are attached to the east
and west port at their endpoints. The case where $P_1$ is a
singleton is analogous. Having laid out the base of our drawing, we
now place in an incremental manner the remaining partitions. Assume
that we have already constructed a drawing for $G_{k-1}$ and we now
have to place $P_k$, for some $k=2,\ldots,m-1$.

In case where $P_k = \{ v_i, \ldots, v_j \}$ is a chain of $j - i +
1$ vertices, we draw them from left to right along a horizontal line
one unit above $G_{k-1}$. Since $v_i$ and $v_j$ are the only
vertices that are adjacent to vertices in $G_{k-1}$, both only to
one, we place the chain between those two as in
Fig.\ref{fig:4p_chain}. The port used at the endpoints of $P_k$ in
$G_{k-1}$ depends on the following rule: Let $v_{i}'$ ($v_{j}'$,
resp.) be the neighbor of $v_i$ ($v_j$, resp.) in $G_{k-1}$. If the
edge $(v_i, v_i')$ ($(v_j, v_j')$, resp.) is the last to be attached
to vertex $v_i'$ ($v_j'$, resp.), i.e., there is no vertex $v$ in
$P_l \in \Pi$, $l > k$ such that $(v_i',v) \in E$ ($(v_j',v) \in E$,
resp.), then we use the northern port of $v_{i}'$ ($v_{j}'$, resp.).
Otherwise, we choose the north-east port for $(v_i, v_i')$ or the
north-west port for $(v_j, v_j')$.

In case of a singleton $P_k = \{ v_i \}$, we can apply the previous
rule if the singleton is of degree three, as the third neighbor of
$v_i$ should belong to a partition $\Pi_j$ for some $j > k$.
However, in case where $v_i$ is of degree four we may have to deal with
an additional third edge $(v_i, v)$ that connects $v_i$ with
$G_{k-1}$. By the placement so far, we may assume that $v$ lies
between the other two endpoints, thus, we place $v_i$ such that
$x(v_i) = x(v)$. This enables us to draw $(v_i, v)$ as a vertical
line-segment; see Fig.\ref{fig:4p_singleton}.

The above procedure is able to handle all chains and singletons
except the last partition $P_m$, because $v_n$ may have $4$ edges
pointing downwards. One of these edges is $(v_n, v_1)$, by
definition of $\Pi$. We exclude $(v_n, v_1)$ and draw $v_n$ as an
ordinary singleton. Then, we shift $v_1$ to the left and up as in
Fig.\ref{fig:4p_final}. This enables us to draw $(v_1, v_n)$ as a
horizontal-vertical segment combination. For $(v_1, v_2)$, we move
$v_2$ one unit to the right and down. We free the west port of $v_2$
by redrawing its incident edges as in Fig.\ref{fig:4p_final} and
attach $(v_1, v_2)$ to it. Edge $(v_1,v_2)$ will be drawn as a
diagonal segment with positive slope connected to $v_1$ and a
horizontal segment connected to $v_2$, which requires one bend. Let
$(v_2,v_i)$ be the other incomplete edge according to Figure
\ref{fig:4p_final}. It will be drawn using a diagonal segment with
positive slope connected to $v_2$ and a horizontal segment connected
to $v_i$, again requiring one bend.

So far, we have specified a valid port assignment and the
y-coordinates of the vertices. However, we have not fully specified
their x-coordinates. Notice that by construction every edge, except
the ones drawn as vertical line-segments, contains exactly one
horizontal segment. This enables us to stretch the drawing
horizontally by employing appropriate cuts. A \emph{cut}, for us, is
a $y$-monotone continuous curve that crosses only horizontal
segments and divides the current drawing into a left and a right
part. It is not difficult to see that we can shift the right part of
the drawing that is defined by the cut further to the right while
keeping the left part of the drawing on place and the result remains
a valid octilinear drawing.

To compute the x-coordinates, we proceed as follows. We first assign
consecutive x-coordinates to the first two partitions and from there
on we may have to stretch the drawing in two cases. The first one
appears when we introduce a chain, say $P_k$, as it may not fit into
the gap defined by its two adjacent vertices in $G_{k-1}$. In this
case, we horizontally stretch the drawing between its two adjacent
vertices in $G_{k-1}$ to ensure that their horizontal distance is at
least $|P_{k}| + 1$. The other case appears when an edge that
contains a diagonal segment is to be drawn. Such an edge requires a
horizontal distance between its endpoints that is at least the
height it bridges. We also have to prevent it from intersecting any
horizontal-vertical combinations in the face below it. We can cope
with both cases by horizontally stretching the drawing by a factor
that is bounded by the current height of the drawing. Since the
height of the resulting drawing is bounded by $|\Pi|=O(n)$, it
follows that in the worst case its width is $O(n^2)$. We are now
ready to state the main theorem of this subsection.

\begin{figure}[t]
    \centering
    \begin{minipage}[b]{.32\textwidth}
        \centering
        \subfloat[\label{fig:4p_chain}{}]
        {\includegraphics[width=.9\textwidth]{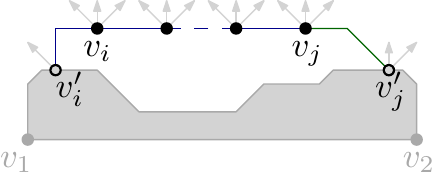}}
    \end{minipage}
    \begin{minipage}[b]{.32\textwidth}
        \centering
        \subfloat[\label{fig:4p_singleton}{}]
        {\includegraphics[width=.9\textwidth]{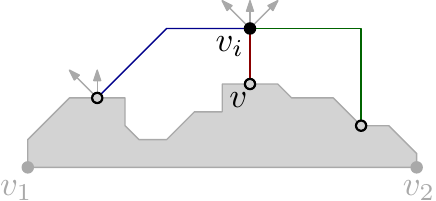}}
    \end{minipage}
    \begin{minipage}[b]{.32\textwidth}
        \centering
        \subfloat[\label{fig:4p_final}{}]
        {\includegraphics[width=\textwidth,page=3]{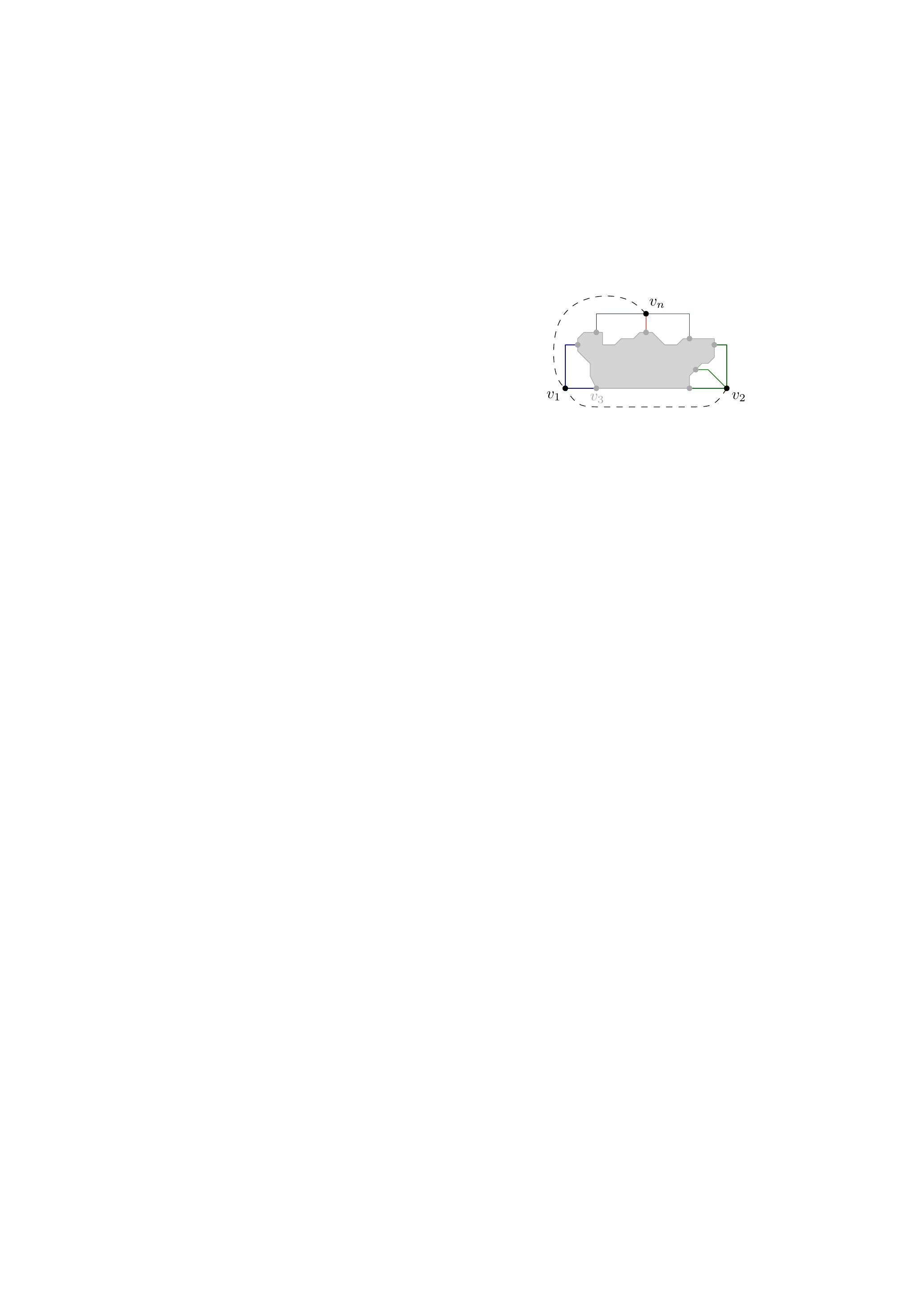}}
    \end{minipage}
    \caption{
    (a)~Horizontal placement of a chain $P_k = \{ v_i, \ldots, v_j\}$.
    (b)~Placement of a singleton $P_k = \{ v_i \}$ with degree four.
    (c)~Final layout after repositioning $v_1$ and $v_2$ (the shape of the dotted edges can be obtained by extending the stubs until they intersect).}
    \label{fig:4p_canonical}
\end{figure}

\begin{theorem}
Given a triconnected 4-planar graph $G$, we can compute in $O(n)$
time an octilinear drawing of $G$ with at most one bend per edge on
an $O(n^2) \times O(n)$ integer grid.
\end{theorem}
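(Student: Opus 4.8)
The plan is to verify that the incremental construction described before the theorem statement indeed produces a valid planar octilinear drawing with the claimed area and running time. The proof is essentially a careful bookkeeping argument built around the canonical order, so I would organize it as an induction on $k$ (the index of the partition $P_k$), with the invariant that after placing $P_k$ the drawing $\Gamma(G_k)$ is a planar octilinear drawing in which (a) every vertex of $P_k$ that still has neighbors in later partitions has its relevant ports (north, north-east, north-west) free, (b) every edge drawn so far has at most one bend, and (c) the height of the drawing is at most $k+O(1)$.

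First I would establish the base case: the placement of $P_0$ and $P_1$ on a horizontal line uses only horizontal segments attached to east/west ports, so $\Gamma(G_1)$ is trivially a bendless planar octilinear drawing and the invariant holds. Next I would carry out the inductive step by distinguishing the three cases already sketched in the text — $P_k$ a chain, $P_k$ a singleton of degree three, and $P_k$ a singleton of degree four — and in each case argue three things: (i) planarity is preserved, which follows because $P_k$ is drawn one unit above $G_{k-1}$ and, by property (ii) of the canonical order, all its neighbors in $G_{k-1}$ lie on the outer face of $G_{k-1}$, so the new edges can be routed in the outer region without crossings after the appropriate horizontal stretching; (ii) each new edge has at most one bend, which follows from the port rule (a northern port gives a vertical–horizontal combination with one bend, a north-east/north-west port gives a diagonal–horizontal combination with one bend, and the vertical edge of a degree-four singleton has no bend); (iii) the port-availability part of the invariant is maintained, which is exactly what property (iii) of the canonical order guarantees — every vertex of $P_k$ has a neighbor in a later partition, so we never exhaust the reserved upward ports prematurely, and the rule "use the northern port only when the edge is the last attached to that vertex" ensures the north-east/north-west ports are still available when needed. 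Then I would handle the final partition $P_m$ separately, exactly as described: draw $v_n$ as an ordinary singleton after excluding $(v_n,v_1)$, reposition $v_1$ up and to the left and $v_2$ down and to the right, redraw the few incident edges of $v_2$ to free its west port, and check that the two repaired edges $(v_1,v_n)$, $(v_1,v_2)$ and the incomplete edge $(v_2,v_i)$ each get exactly one bend and introduce no crossings — here I would appeal to the fact that the relevant faces are on the outer boundary and have enough room after stretching.

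For the area bound I would invoke the stretching argument verbatim: the $y$-coordinates are assigned so that the height equals $|\Pi| = O(n)$; since every non-vertical edge contains exactly one horizontal segment, a $y$-monotone cut crossing only horizontal segments lets us widen the drawing on demand, and the two situations requiring a stretch (fitting a chain into the gap between its two anchors, and giving a diagonal edge horizontal room at least equal to the height it bridges while avoiding the horizontal–vertical combinations below it) each cost a multiplicative factor bounded by the current height $O(n)$, giving total width $O(n^2)$. For the running time, I would note that a canonical order can be computed in linear time, each partition is processed in time proportional to its size plus the number of its incident edges, and the cumulative stretching can be realized with a standard linear-time coordinate-compaction sweep, so the whole algorithm runs in $O(n)$ time.

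The main obstacle I anticipate is the crossing-freeness and port-feasibility check in the degree-four cases — both the degree-four singleton in the middle of the order (where a third downward edge $(v_i,v)$ must be routed as a vertical segment through a face that already contains other edges) and the special treatment of $v_1$ and $v_2$ at the last step (where rerouting incident edges of $v_2$ to liberate its west port must not conflict with edges already drawn). These are the only places where the "one bend per edge" budget is tight and where the local geometry could in principle force a crossing or a second bend; establishing that the reserved ports and the available horizontal slack always suffice is the crux of the argument, whereas the remaining cases are routine given the canonical-order properties.
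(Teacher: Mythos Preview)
Your correctness and area arguments track the paper's construction closely and are fine; the paper in fact treats those points as already established by the preceding description, and the formal proof of the theorem is devoted almost entirely to the $O(n)$ running-time claim. That is precisely where your proposal has a genuine gap.

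You assert that ``the cumulative stretching can be realized with a standard linear-time coordinate-compaction sweep'', but this is not justified and is the non-trivial part. There are two kinds of stretches in the construction: (a) widening the gap between $v_i'$ and $v_j'$ so that a chain $P_k$ fits, and (b) widening to give a diagonal segment horizontal room at least equal to the height it bridges and to clear the horizontal--vertical combinations below it. Stretches of type (a) are local---the amount depends only on $|P_k|$---and can indeed be absorbed into Kant's relative-coordinate shifting scheme in $O(n)$ total time. Stretches of type (b), however, depend on the \emph{current height} of the drawing, which is a global quantity changing as the algorithm proceeds; handling these on demand is what would naively push the running time to $O(n^2)$, and a single post-processing compaction sweep does not obviously resolve them either, since the required slack at each diagonal edge is determined by geometry that is only fixed once all $y$-coordinates are known. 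The paper's proof avoids this difficulty by a concrete trick you are missing: it works on a grid whose horizontal unit is $n$ and whose vertical unit is $1$. With this a~priori scaling, every diagonal segment automatically has enough horizontal room and cannot cross anything, so stretches of type (b) never occur; only the chain-fitting stretches of type (a) remain, and those are handled in linear total time by Kant's shifting method with relative coordinates. Without this (or an equivalent) device, your linear-time claim is unsupported.
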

\begin{proof}
In order to keep the time complexity of our algorithm linear, we
employ a simple trick. We assume that any two adjacent points of the
underlying integer grid are by $n$ units apart in the horizontal
direction and by one unit in the vertical direction. This a priori
ensures that all edges that contain a diagonal segment will not be
involved in crossings and simultaneously does not affect the total
area of the drawing, which asymptotically remains cubic. On the
other hand, the advantage of this approach is that we can use the
shifting method of Kant~\cite{Kant92b} to cope with the introduction
of chains in the drawing, that needs $O(n)$ time in total by keeping
relative coordinates that can be efficiently updated and computing
the absolute values only at the last step.
\end{proof}

Note that our algorithm produces drawings that have a linear number
of bends in total (in particular, exactly $2|\Pi|=O(n)$ bends). In
the following, we prove that this bound is asymptotically tight.

\begin{theorem}
There exists an infinite class of 4-planar graphs which do not admit
bendless octilinear drawings and if they are drawn with at most one
bend per edge, then a linear number of bends is required.
\end{theorem}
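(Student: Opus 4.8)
The plan is to exhibit a concrete family of 4-planar graphs that (a) cannot be drawn octilinearly without bends, and (b) force $\Omega(n)$ bends in any octilinear drawing with at most one bend per edge. A natural candidate is a family built by chaining together many copies of a small rigid gadget. For the rigid gadget I would take a triconnected 4-planar graph $H$ whose unique embedding (up to reflection) is such that, at some designated vertex $v$ of degree $4$, the cyclic order of the four incident edges cannot be realized with the four octilinear directions at $v$ while simultaneously keeping all of $H$'s faces drawable without bends; the octahedron (or a slightly decorated octahedron, since Biedl--Kant already single it out as an obstruction in the orthogonal setting) is the obvious seed. The family $G_n$ is then obtained by taking $n$ such gadgets and identifying consecutive gadgets along an edge or a vertex, so that $|V(G_n)| = \Theta(n)$ and each gadget must still contribute at least one bend in any valid one-bend-per-edge drawing.

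The first step is to establish part (a): no member of the family admits a bendless octilinear drawing. For this I would argue locally at a single gadget. In a bendless octilinear drawing every edge is a straight segment in one of the four slopes, so around each vertex the at most four incident edges occupy four of the eight available ports, and the cyclic port order must match the embedding. The key sublemma is that the gadget $H$ has a vertex (or a short cycle) whose incident edges, under the forced embedding, require a port pattern that is octilinearly infeasible without a bend -- e.g. a triangular face all three of whose vertices have degree $4$ with the two "side" edges pinned to non-adjacent ports, leaving no octilinear slope for the third. Since every $G_n$ contains $H$ as an induced subgraph with its embedding inherited, the obstruction persists, so $G_n$ has no bendless octilinear drawing.

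The second step is the lower bound. Here I would show that in \emph{any} octilinear drawing of $G_n$ with at most one bend per edge, each gadget copy forces at least a constant number $c \ge 1$ of bends on edges that lie "inside" that copy, and that these bend-bearing edge sets are disjoint across the $\Theta(n)$ copies (this is why the gadgets are glued only along a single shared edge/vertex, so that the interior edges of distinct copies are genuinely different). Combining, the total number of bends is at least $c \cdot \Theta(n) = \Omega(n)$. To prove the per-gadget claim I would use a counting/pigeonhole argument at the high-degree vertices: a degree-$4$ vertex has exactly four octilinear ports and four edges, so if even one neighborhood is octilinearly "twisted" (as established for part (a)), then at least one incident edge must bend; a careful choice of the gadget guarantees such a twisted vertex survives no matter how the one allowed bend on each edge is used, because a single bend per edge only lets an edge leave its endpoints along two (rather than one) slope-pairs, and the gadget is designed so that two choices per edge are still not enough.

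The main obstacle I anticipate is part (b): ruling out \emph{all} one-bend-per-edge drawings is much more delicate than ruling out bendless ones, because a bend gives each edge substantial freedom (it can connect essentially arbitrary ports at its two ends). The crux is therefore to engineer the gadget so that the local port constraints are \emph{over-determined} -- there must be more "demand" for octilinear flexibility around the dense part of the gadget than even the one-bend-per-edge budget can supply -- and then to make this rigorous with a clean discharging or pigeonhole argument that is insensitive to the global embedding and to how crossings-free planarity routes the rest of the drawing. Isolating the smallest such self-contained gadget, and proving its one-bend obstruction by a finite case analysis over port assignments at its degree-$4$ vertices, is where the real work lies.
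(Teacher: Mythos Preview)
Your overall architecture (repeat a rigid gadget $\Theta(n)$ times, argue one bend per copy) is reasonable, but the proposal has two concrete problems, and the paper's actual construction sidesteps both with a much simpler idea.

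First, the gluing you suggest does not stay $4$-planar. If each gadget is $4$-regular (as the octahedron is) and you identify two gadgets along a vertex, that vertex acquires degree $8$; identifying along an edge gives its endpoints degree $6$. So you cannot ``chain octahedra along an edge or a vertex'' and remain in the class. Fixing this forces you to leave low-degree attachment vertices in the gadget, which in turn weakens exactly the degree-$4$ rigidity you want to exploit for the per-gadget bend. This is not unfixable, but it is a genuine gap in the plan as stated.

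Second, your part~(b) is, as you yourself say, where the real work lies, and the proposal does not contain the key lemma. Saying ``the gadget is designed so that two choices per edge are still not enough'' is the conclusion, not the argument.

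The paper's proof avoids both difficulties with a single clean observation. The family is a sequence of \emph{nested separating triangles}: each triangle has all three vertices of degree~$4$, with incident edges going both to the outside and to the inside of that triangle, and the whole graph is triconnected (so the embedding is fixed). The per-triangle bend is then forced by elementary geometry: in a bendless octilinear drawing every interior angle of a triangle is a positive multiple of $45^\circ$, and since the three interior angles sum to $180^\circ$, at least one of them equals $45^\circ$. At a vertex with a $45^\circ$ interior angle the two triangle edges occupy adjacent ports, so \emph{no} port points into the interior of the triangle at that vertex; but a separating triangle has an edge going inside at every vertex. Hence every separating triangle contributes at least one bend, and with $\Theta(n)$ nested triangles you get $\Theta(n)$ bends. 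No gadget-gluing, no case analysis over port assignments, and the construction is $4$-planar and triconnected by inspection.
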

\begin{proof}
Based on the simple fact that in an orthogonal drawing a triangle
requires at least one bend, we describe an example that translates
this idea to the octilinear model (see Fig.\ref{fig:4p_lowerbound}).
While a triangle can easily be drawn bendless with the additional
ports available, we will occupy those to enforce the creation of a
bend as in the orthogonal model. Furthermore, the example is
triconnected. Hence, its embedding is fixed up to the choice of the
outer face. Our construction is heavily based on the so called
\emph{separating triangle}, i.e., a three-cycle whose removal
disconnects the graph. Each vertex of such a triangle has degree
four. Any triangle which is drawn bendless has a $45^\circ$ angle
inside. But since the triangles are nested and have incident edges
going inside of the triangles, this is impossible.
\end{proof}

\begin{figure}[t]
    \centering
    \includegraphics[width=.4\textwidth]{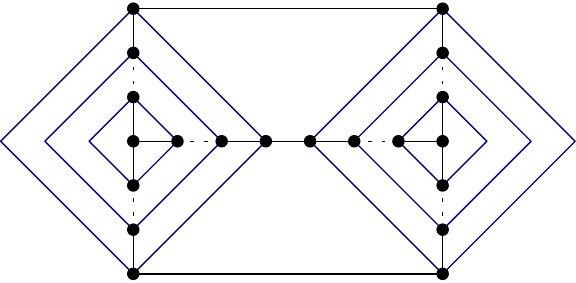}
    \caption{Nested separating triangles each requiring one bend.}
    \label{fig:4p_lowerbound}
\end{figure}

\subsection{The Biconnected Case}
\label{sec:4bicon}

Following standard practice, we employ a rooted SPQR-tree and assume
for a tree node that the pertinent graphs of its children are drawn
in a pre-specified way. Consider a node $\mu$ in $\mathcal{T}$ with
poles $\mathcal{P}_\mu = \{ s, t \}$. In the drawing of
$\pert{\mu}$, $s$ should be located at the upper-left and $t$ at the
lower-right corner of the drawing's bounding box with a port
assignment as in Fig.\ref{fig:4planar_layout_spec}. In general, we
assume that the edges incident to $s$ ($t$, resp.) use the western
(eastern, resp.) port at their other endpoint, except of the
northern (southern, resp.) most edge which may use the north (south,
resp.) port instead. In that case we refer to $s$ and $t$ as
\emph{fixed}; see $\overline{e}_s, \overline{e}_t$ in
Fig.\ref{fig:4planar_layout_spec}. More specifically, we maintain
the following invariants:

\begin{enumerate}[{I}P-1:]
\item \label{ip:1} The width (height) of the drawing of $\mu$ is
quadratic (linear) in the size of $\pert{\mu}$. $s$ is located at
the upper-left; $t$ at the lower-right corner of the drawing's
bounding box.
\item \label{ip:2} If $\pdeg{\mu}{s} \geq 2$, $s$ is fixed;
$t$ is fixed if $\pdeg{\mu}{t} = 3$ and $\mu$'s parent is not the
root.
\item \label{ip:3} The edges that are incident at $s$ and $t$ in
$\pert{\mu}$ use the south, south-east and east ports at $s$ and the
north, north-west and west port at $t$, resp. If $s$ or $t$ is not
fixed, incident edges are attached at their other endpoints via the
west and east port, respectively. If $s$ or $t$ is fixed, the
northern-most edge at $s$ and the southern-most edge at $t$ may use
the north (south, resp.) port at its other endpoint.
\end{enumerate}

Notice that the port assignment, i.e. \IPPort, guarantees the
ability to stretch the drawing horizontally even in the case where
both poles are fixed. Furthermore, \IPFix is \emph{interchangeable}
in the following sense: If $\pdeg{\mu}{s} = 2$ and $\pdeg{\mu}{t} =
1$, then $s$ is fixed but $t$ is not. But, if we relabel $s$ and $t$
such that $t'= s$ and $s' = t$, then $\pdeg{\mu}{s'} = 1$ and
$\pdeg{\mu}{t'} = 2$. By \IPFix, we can create a drawing where both
$s'$ and $t'$ are not fixed and located in the upper-left and
lower-right corner of the drawing's bounding box. Afterwards, we
mirror the resulting layout vertically and horizontally to obtain
one where $s$ and $t$ are in their respective corners and not fixed.
Notice that in general the property of being fixed is not symmetric,
e.g., when $\pdeg{\mu}{s} = 3$ and $\pdeg{\mu}{t} = 2$ holds, $s$
remains fixed while $t$ becomes fixed as well. For a non-fixed
vertex, we introduce an operation that is referred to as forming or
creating a \emph{nose}; see Fig.\ref{fig:4planar_nose_example},
where $t$ has been moved downwards at the cost of a bend. As a
result, the west port of $t$ is no longer occupied.

\begin{figure}[t]
    \centering
    \begin{minipage}[b]{.24\textwidth}
        \centering
        \subfloat[\label{fig:4planar_layout_spec}{}]
        {\includegraphics[width=.95\textwidth]{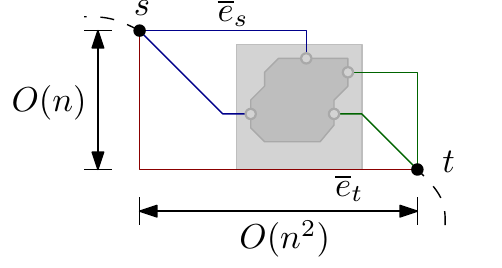}}
    \end{minipage}
    \hfill
    \begin{minipage}[b]{.24\textwidth}
        \centering
        \subfloat[\label{fig:4planar_nose_example}{}]
        {\includegraphics[width=.95\textwidth,page=1]{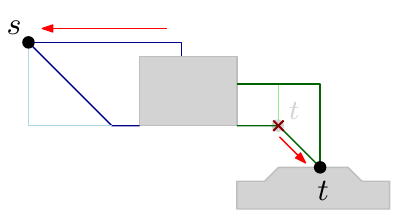}}
    \end{minipage}
    \hfill
    \begin{minipage}[b]{.24\textwidth}
        \centering
        \subfloat[\label{fig:4p_P_case_1}{}]
        {\includegraphics[width=.95\textwidth,page=1]{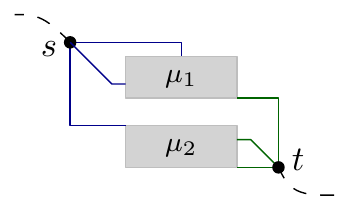}}
    \end{minipage}
    \hfill
    \begin{minipage}[b]{.24\textwidth}
        \centering
        \subfloat[\label{fig:4p_P_case_2}{}]
        {\includegraphics[width=.95\textwidth,page=2]{4p_P_case}}
    \end{minipage}
    \caption{
    (a)~Schematic view of the layout requirements.
    (b)~Creating a nose at $t$.
    (c)~First P-node subcase without an $(s,t)$-edge but $s$ might be fixed in a child $\mu_1$.
    (d)~Second P-node subcase with an $(s,t)$-edge where $t$ might get fixed in a child $\mu_2$.}
    \label{fig:4p_requirements}
\end{figure}

\begin{description}
\item[P-node case:] Let $\mu$ be a P-node. By Lemma~\ref{lem:pdeg_bounds},
for a child $\mu'$ of $\mu$, it holds that $\pdeg{\mu'}{s} \leq 2$
and $\pdeg{\mu'}{t} \leq 2$. So, $t$ can form a nose in $\mu'$,
while $s$ might be fixed in the case where $\pdeg{\mu'}{s} = 2$.
Notice that there exists at most one such child due to the degree
restriction. We distinguish two cases based on the existence of an
$(s,t)$-edge.

In the first case, assume that there is no $(s,t)$-edge, i.e., there
is no child that is a Q-node. We draw the children of $\mu$ from top
to bottom such that a possible child in which $s$ is fixed, is drawn
topmost (see $\mu_1$ in Fig.\ref{fig:4p_P_case_1}). In the second
case, we draw the $(s,t)$-edge at the top and afterwards the
remaining children (see Fig.\ref{fig:4p_P_case_2}). Of course, this
works only if $s$ is not fixed in any of the other children. Let
$\mu'$ be such a potential child where $s$ is fixed, i.e.,
$\pdeg{\mu'}{s} = 2$, and thus, the only child that remains to be
drawn. Here, we use the property of interchangeability to ``unfix"
$s$ in $\mu'$. As a result $s$ can form a nose, whereas $t$ may now
be fixed in $\mu'$ when $\pdeg{\mu'}{t} = 2$ holds, as in
Fig.\ref{fig:4p_P_case_2}. However, then $\pdeg{\mu}{t} = 3$
follows. Notice that the presence of an $(s,t)$-edge implies that
the parent of $\mu$ is not the root of $\mathcal{T}$, since this
would induce a pair of parallel edges. Hence, by \IPFix we are
allowed to fix $t$ in $\mu$. Port assignment and area requirements
comply in both cases with our invariant properties.
\item[S-node case:] We place the drawings of the children,
say $\mu_1, \ldots, \mu_\ell$, of an S-node $\mu$ in a ``diagonal
manner'' such that their corners touch as in
Fig.\ref{fig:4p_S_case}. In case of Q-nodes being involved, we draw
their edges as horizontal segments (see, e.g., edge $(v_3,v_4)$ in
Fig.\ref{fig:4p_S_case} that corresponds to Q-node $\mu_3$). Observe
that $s$ and $t$ inherit their port assignment and pertinent degree
from $\mu_1$ and $\mu_\ell$, respectively, i.e., $\pdeg{\mu}{s} =
\pdeg{\mu_1}{s}$ and $\pdeg{\mu}{t} = \pdeg{\mu_\ell}{t}$. So, we
may assume that $s$ is fixed in $\mu$, if $s$ is fixed in $\mu_1$.
Similarly, $t$ is fixed in $\mu$, if $t$ is fixed in $\mu_\ell$. By
\IPFix, $t$ is not allowed to be fixed in the case where the parent
of $\mu$ is the root of $\mathcal{T}$. However, Lemma
\ref{lem:pr_node} states that we can choose the root such that $t$
is not fixed in that case, and thus, complies with \IPFix. Since we
only concatenated the drawings of the children, \IPGeo and \IPPort
are satisfied.
\item[R-node case:] For the case where $\mu$ is an R-node with poles
$\mathcal{P}_{\mu} = \{s, t\}$, we follow the basic idea of the
triconnected algorithm of the previous section and describe the
modifications necessary to handle the drawing of the children of
$\mu$. To do so, we assume the worst case where no child of $\mu$ is
a Q-node. Let $\mu_{uv}$ denote the child that is represented by the
virtual edge $(u,v) \in \Eskel{\mu}$. Notice that due to
Lemma~\ref{lem:pdeg_bounds}, $\pdeg{\mu_{uv}}{u} \leq 2$ and
$\pdeg{\mu_{uv}}{v} \leq 2$ holds. Hence, with \IPFix we may assume
that at most one out of $u$ and $v$ is fixed in $\mu_{uv}$. We
choose the first partition in the canonical ordering to be $P_0 =
\{s, t\}$ and distinguish again between whether the partition to be
placed next is a chain or a singleton.

\begin{figure}[t]
    \centering
    \begin{minipage}[b]{.36\textwidth}
        \centering
        \subfloat[\label{fig:4p_S_case}{}]
        {\includegraphics[width=\linewidth, page=1]{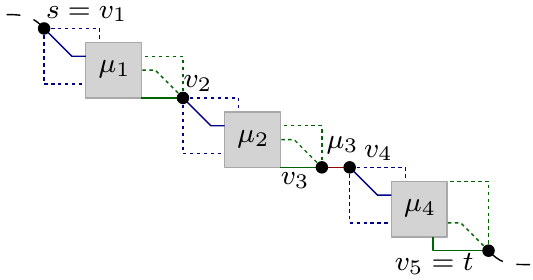}}
    \end{minipage}\hspace{.13\textwidth}
    \begin{minipage}[b]{.36\textwidth}
        \centering
        \subfloat[\label{fig:4p_R_chain_case}{}]
        {\includegraphics[width=\linewidth, page=1]{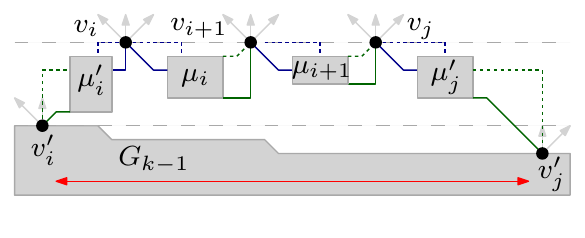}}
    \end{minipage}\\
    \begin{minipage}[b]{.36\textwidth}
        \centering
        \subfloat[\label{fig:4p_R_singleton_case}{}]
        {\includegraphics[width=\linewidth, page=2]{4p_R_case_new}}
    \end{minipage}\hspace{.13\textwidth}
    \begin{minipage}[b]{.36\textwidth}
        \centering
        \subfloat[\label{fig:4p_R_last_vertex_case}{}]
        {\includegraphics[width=\linewidth, page=4]{4p_R_case_new}}
    \end{minipage}
    \caption{
    (a)~S-node with children $\mu_1, \ldots, \mu_4$; $\mu_3$ is a Q-node representing the edge $(v_3, v_4)$. Optional edges are drawn dotted.
    (b)~Example for a chain $v_i, \ldots, v_j$ with virtual edges representing $\mu_i, \ldots, \mu_{j-1}$ in the R-node case.
    (c)~Singleton $v_i$ with possibly three incident virtual edges representing $\mu'_i, \mu'_v, \mu'_j$.
    (d)~Placing $v_n$ and moving up $s$ which might be fixed in $\mu_{sn}$.}
\end{figure}

In case of a chain, say $P_k = \{ v_i, \ldots, v_j \}$ with two
neighbors $v_{i}'$ and $v_{j}'$ in $G_{k-1}$, we have to replace two
types of edges with the drawings of the corresponding children: the
edges $(v_i, v_{i+1}), \ldots, (v_{j-1}, v_{j})$ representing the
children $\mu_i, \ldots, \mu_{j-1}$ and $(v_{i}', v_{i})$ ($(v_{j},
v_{j}')$ resp.) representing $\mu_{i}'$ ($\mu_{j}'$ resp.). We place
the vertices of $P_k$ on a horizontal line high enough above
$G_{k-1}$ such that every drawing may fit in-between it and
$G_{k-1}$. Then, we insert the drawings aligned below the horizontal
line and choose for $i \leq l < j$, $v_{l}$ to be the fixed node in
$\mu_{l}$, whereas in $\mu_{i}'$ ($\mu_{j}'$ resp.), we set $v_{i}$
($v_j$ resp.) to be fixed. Hence, for $i \leq l < j$, $v_{l+1}$ may
form a nose in $\mu_{l}$ pointing upwards while $v_{i}'$ and
$v_{j}'$ form each one downwards as depicted in
Fig.\ref{fig:4p_R_chain_case}. For the extra height and width, we
stretch the drawing horizontally.

For the case where $P_k = \{ v_i \}$ and $i \neq n$ is a singleton,
we only outline the difference which is a possible third edge $(v_i,
v)$ to $G_{k-1}$ representing say $\mu_{v}'$. While the other two
involved children, say  $\mu_{i}'$ and $\mu_{j}'$, are handled as in
the chain-case, $\mu_{v}'$ requires extra height now and we may
place $v_i$ such that $\mu_{v}'$ fits below $\mu_{j}'$ as in
Fig.\ref{fig:4p_R_singleton_case}. Notice that $\pdeg{\mu_{v}'}{v_i}
= 1$ holds and therefore by \IPFix both $v_i$ and $v$ are not fixed
in $\mu_{v}'$. Hence, forming a nose at $v_i$ and $v$ as in
Fig.\ref{fig:4p_R_singleton_case} is feasible.

It remains to describe the special case where the last singleton
$P_k = \{ v_n \}$ is placed. Since $s,t \in P_0$, both have not been
fixed yet. We proceed as in the triconnected algorithm and move $s =
v_1$ above $v_n$ as depicted in Fig.\ref{fig:4p_R_last_vertex_case},
high enough to accommodate the drawing of the child $\mu_{sn}$
represented by the edge $(s, v_n)$. Since we may require $v_n$ to
form a nose in $\mu_{sn}$ as in Fig.\ref{fig:4p_R_last_vertex_case},
we choose $s$ to be fixed in $\mu_{sn}$. However, we are allowed by
\IPFix to fix $s$ since $t$ remains unfixed. For the area
constraints of \IPGeo, we argue as follows: Although some diagonal
segments may force us to stretch the whole drawing by its height,
the height of the drawing has been kept linear in the size of
$\pert{\mu}$. Since we increase the width by the height a constant
number of times per step, the resulting width remains quadratic.
\item[Root case:] For the root of $\mathcal{T}$ we distinguish two cases:
In the first case, there exists a vertex $v \in V$ with $\deg(v)
\leq 3$. Then, we choose as root a Q-node $\mu$ that represents one
of its three incident edges and orient the poles $\{s,t\}$ such that
$t = v$. Hence, for the child $\mu'$ of $\mu$ follows
$\pdeg{\mu'}{t} \leq 2$. In the other case, i.e., for every $v \in
V$ we have $\deg(v) = 4$, we choose a Q-node that is not adjacent to
an S-node, whose existence is guaranteed by Lemma~\ref{lem:pr_node}.
In both cases, we may form a nose with $t$ pointing downwards and
draw the edge as in the triconnected algorithm.
\end{description}

\begin{theorem}
Given a biconnected 4-planar graph $G$, we can compute in $O(n)$
time an octilinear drawing of $G$ with at most one bend per edge on
an $O(n^2) \times O(n)$ integer grid.
\end{theorem}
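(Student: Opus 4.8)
The plan is to assemble the per-node constructions described above into a single bottom-up sweep of the SPQR-tree $\mathcal{T}$ of $G$ and to argue by induction that the invariants \IPGeo, \IPFix\ and \IPPort\ are maintained throughout. First I would fix the root: if $G$ has a vertex $v$ of degree at most $3$, I root $\mathcal{T}$ at a Q-node incident to $v$ and orient its poles so that $t=v$; otherwise every vertex has degree $4$ and I use Lemma~\ref{lem:pr_node} to pick a Q-node that is not adjacent to an S-node as the root. In either case the unique child $\mu'$ of the root satisfies $\pdeg{\mu'}{t}\le 2$, which is exactly the condition the root case above needs in order to form a nose at $t$ and close the drawing with the reference edge as in the triconnected algorithm.

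For the inductive step I would process the nodes of $\mathcal{T}$ in post-order. The base case is a Q-node, whose pertinent graph is a single edge drawn as a horizontal segment with $s$ at the upper-left and $t$ at the lower-right corner of the bounding box, so \IPGeo, \IPFix\ and \IPPort\ hold trivially. For an internal node $\mu$ I apply the matching construction from the case analysis above (P-node, S-node, or R-node) to the already computed drawings of the children of $\mu$. The crucial ingredient is Lemma~\ref{lem:pdeg_bounds}: since $\mu$'s parent can only be a P-, R-, S-, or Q-node, the lemma bounds the pertinent degree of each pole of a child so that at most one pole of a child is fixed, which is precisely what makes the nose operation and the various placements (diagonal stacking for S-nodes, top-to-bottom stacking for P-nodes, canonical-order placement for R-nodes) legal. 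I would then verify, case by case, that the resulting drawing again has $s$ at the upper-left and $t$ at the lower-right corner with the port pattern of \IPPort, that $s$ (resp.\ $t$) is fixed exactly when \IPFix\ demands it --- using interchangeability of \IPFix\ in the P-node subcase with an $(s,t)$-edge, and the freedom in the choice of the root guaranteed by Lemma~\ref{lem:pr_node} to keep $t$ unfixed when $\mu$'s parent is the root --- and that the bounding box has width quadratic and height linear in $|\Epert{\mu}|$. Since every combination step only concatenates child drawings and performs a constant number of horizontal stretches, each by a factor bounded by the current (linear) height, \IPGeo\ propagates; after handling the root we obtain a planar octilinear drawing of all of $G$ with at most one bend per edge on an $O(n^2)\times O(n)$ grid.

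For the running time I would, exactly as in the triconnected case, avoid recomputing absolute coordinates at every step: vertices are stored with relative $x$-coordinates, the shifting method of Kant realizes the horizontal stretches in $O(n)$ total time, the $y$-coordinates are read off from the recursion depth, and absolute coordinates are computed once at the end. Each node of $\mathcal{T}$ is touched once and contributes $O(1)$ bookkeeping beyond the shifts, and $\mathcal{T}$ has linear size, so the whole algorithm runs in $O(n)$ time.

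I expect the main obstacle to be the bookkeeping in the inductive step: one has to check simultaneously, for every node type and every sub-case (in particular the two P-node sub-cases and the chain/singleton/last-vertex sub-cases of the R-node), that the port assignment \IPPort\ and the deliberately asymmetric notion of a pole being ``fixed'' in \IPFix\ are exactly re-established, since an incorrect fixed/unfixed status of a pole propagates up the tree and can make a later nose operation or a later stretch infeasible. The area argument is comparatively routine once one observes that each step adds only $O(1)$ stretches of magnitude at most the current height and increases the height by at most the size of the newly placed part.
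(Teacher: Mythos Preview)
Your proposal is correct and follows essentially the same approach as the paper: the paper's proof of this theorem is very short and only addresses the running time, since correctness and the area bound are already established by the preceding P-, S-, R-, and root-node constructions together with the invariants \IPGeo, \IPFix, \IPPort, which you faithfully recapitulate. One small refinement: your running-time sentence ``each node of $\mathcal{T}$ contributes $O(1)$ bookkeeping beyond the shifts'' is slightly imprecise for R-nodes (and S-nodes), whose processing is linear in the size of their skeleton; the paper closes this by observing that the total number of skeleton edges over all nodes of $\mathcal{T}$ is $O(n)$, since each virtual edge corresponds to a tree node.
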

\begin{proof}
The SPQR-tree $\mathcal{T}$ can be computed in $O(n)$-time and its
size is linear to the size of $G$ \cite{GM00}. The pertinent degrees
of the poles at every node can be pre-computed by a bottom-up
traversal of $\mathcal{T}$. Drawing a P-node requires constant time;
S- and R-nodes require time linear to the size of the skeleton.
However, the sum over all skeleton edges is linear, as every virtual
edge corresponds to a tree node.
\end{proof}

\subsection{The Simply Connected Case}
\label{sec:4con}
After having shown that we can cope with biconnected 4-planar
graphs, we turn our attention to the connected case. We start by
computing the BC-tree of $G$ and root it at some arbitrary B-node.
Every B-node, except the root, contains a designated cut vertex that
links it to the parent. A \emph{bridge} for a biconnected component
consists only of a single edge. Similar to the biconnected case, we
define an invariant for the drawing of a subtree: The cut vertex
that links the subtree to the parent is located in the upper left
corner of the drawing's bounding box.

\begin{figure}[t]
    \centering
    \begin{minipage}[b]{.24\textwidth}
        \centering
        \subfloat[\label{fig:4p_bc_root}{}]
        {\includegraphics[page=3]{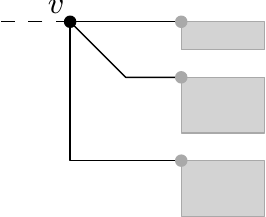}}
    \end{minipage}
    \begin{minipage}[b]{.24\textwidth}
        \centering
        \subfloat[\label{fig:4p_bc_Snode}{}]
        {\includegraphics[page=4]{4p_bc}}
    \end{minipage}
    \begin{minipage}[b]{.24\textwidth}
        \centering
        \subfloat[\label{fig:4p_bc_Rnode}{}]
        {\includegraphics[page=5]{4p_bc}}
    \end{minipage}
    \begin{minipage}[b]{.24\textwidth}
        \centering
        \subfloat[\label{fig:4p_bc_bridges}{}]
        {\includegraphics[page=1]{4p_bc}}
    \end{minipage}
    \caption{
    (a)~Rooting the SPQR-tree such that $v_b$ is in the upper-left corner.
    (b)~All possible situations at an S-node $\mu$. For attaching $b_2$ to $v_2$, the layout had to be modified.
    (c)~Attaching a subtree via a bridge to a cut vertex $v_c$ in an R-node. The dashed edge $(v_i, v')$ may only be present if $v_i = v_n$.
    (d)~A cut vertex where all of its children are attached via bridges.}
\end{figure}

Any subgraph, say $G_b$, induced by a non-bridge biconnected
component can be laid out using the biconnected algorithm. However,
to construct a drawing that satisfies our invariant we have to take
care of two problems. First, the cut vertex, say $v_b$, that links
$G_b$ to the parent, has to be drawn in the upper-left corner of the
subtrees drawing. Second, there may be other cut vertices of $G$ in
$G_b$ to which we have to attach their corresponding subtrees.

For the first problem we describe how to root the SPQR-tree
$\mathcal{T}_b$ for $G_b$ so that $v_b$ is located in the upper-left
corner. There are at least two Q-nodes having $v_b$ as a pole (as
$G_b$ is biconnected) and the degree of $v_b$ in $G_b$ is at most
$3$. In the biconnected case, we distinguished for the root of the
tree between whether there exists $v \in V$ with $\deg(v) \leq 3$ or
not. Hence, we may choose for the root of $\mathcal{T}_b$ a Q-node
having $v_b$ as a pole and orient it such that $v_b = t$, thus,
satisfying $\deg(t) \leq 3$. Then, we flip the final drawing of
$G_b$ such that $t$ is in the upper left corner (see
Fig.\ref{fig:4p_bc_root}).

Next, we address the second problem. Let $v_c$ be a cut vertex in
$G_b$ that is not the link to the parent. If $v_c$ has degree $3$,
then it may occur in the pertinent graph of every node. However, in
this case we only have to attach a subtree of the BC-tree that is
connected via a bridge. This poses no problem, as there are enough
free ports available at $v_c$ and we can afford a bend at the
bridge. We only consider S- and R- nodes here since the poles of
P-nodes occur in the pertinent graphs of the first two. For R-nodes
we assume that the south east port at $v_c$ is free. So, we attach
the drawing via the bridge by creating a bend as in
Fig.\ref{fig:4p_bc_Rnode}. In the diagonal drawing of an S-node, the
north-east port is free. So, we can proceed similar; see
Fig.\ref{fig:4p_bc_Snode}.

If $v_c$ has degree $2$ in $G_b$, it only occurs in the pertinent
graph of an S-node; see $v_3$ in Fig.\ref{fig:4p_bc_Snode}. However,
we may no longer assume that the bridge is available. As a result,
we cannot afford a bend and have to deal with two incident edges
instead of one. We modify the drawing by exploiting the two real
edges incident to $v_c$ in the S-nodes layout to free the east and
south east port; see $v_2$ in Fig.\ref{fig:4p_bc_Snode}. This
enables us to attach the subtrees drawing without modifying it. We
finish this section by dealing with the most simple case where there
are only bridges attached to a cut vertex. The idea is illustrated
in Fig.\ref{fig:4p_bc_bridges} and matches our layout specification.

\begin{theorem}
Given a connected 4-planar graph $G$, we can compute in $O(n)$ time
an octilinear drawing of $G$ with at most one bend per edge on an
$O(n^2) \times O(n)$ integer grid. \label{thm:4planarconnected}
\end{theorem}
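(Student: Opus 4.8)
The plan is to reduce the connected case to the biconnected case via the BC-tree, exactly along the lines already sketched in the body of this subsection. First I would compute the BC-tree $\mathcal{B}$ of $G$ in $O(n)$ time and root it at an arbitrary B-node. Processing the tree bottom-up, for each B-node $\mathcal{B}_b$ with designated cut vertex $v_b$ linking it to its parent, I distinguish whether the component is a bridge or a non-bridge biconnected component. For a non-bridge component I apply the biconnected algorithm (Theorem~\ref{thm:4planar} / the preceding biconnected result) to the induced subgraph $G_b$, after rooting its SPQR-tree $\mathcal{T}_b$ at a Q-node incident to $v_b$ with $v_b = t$; this is legitimate because $\deg_{G_b}(v_b)\le 3$ (there are at least two Q-nodes incident to $v_b$ since $G_b$ is biconnected), so the ``root case'' of the biconnected algorithm with $\deg(t)\le 3$ applies. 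Flipping the resulting drawing places $v_b$ in the upper-left corner, establishing the subtree invariant (cut vertex linking to parent in the upper-left corner of the bounding box). For a bridge I simply draw the single edge and place the child drawing; the invariant is trivially maintained.

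The second ingredient is attaching, inside a drawing of $G_b$, the subtrees hanging off the \emph{other} cut vertices $v_c$ of $G_b$. Here I would case on $\deg_{G_b}(v_c)$ and on which SPQR-node of $\mathcal{T}_b$ contains $v_c$ in its pertinent graph; since poles of P-nodes always reappear in an incident S- or R-node, it suffices to treat S- and R-nodes. If $\deg_{G_b}(v_c)=3$, then $v_c$ has a free port in the drawing produced by the biconnected algorithm (the south-east port in the R-node layout, the north-east port in the diagonal S-node layout), and only a bridge subtree can be attached there; we spend one bend on the bridge and place the child drawing with $v_c$ at its upper-left corner, consistent with the child's invariant. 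If $\deg_{G_b}(v_c)=2$, then $v_c$ occurs only in an S-node and two edges may need to be attached with no bend to spare; I would use the local redrawing of the two real edges incident to $v_c$ in the diagonal S-node layout (as in $v_2$ of Fig.~\ref{fig:4p_bc_Snode}) to free both the east and the south-east ports, so the child drawing attaches without modification. The degenerate case where a cut vertex has only bridges attached is handled directly as in Fig.~\ref{fig:4p_bc_bridges}.

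For the area and running-time bounds I would argue as follows. Each B-node of size $n_b$ produces, by the biconnected algorithm, a drawing of height $O(n_b)$ and width $O(n_b^2)$. Attaching a child subtree only appends its bounding box via a constant-size gadget (one bend or a local redrawing), so the heights add and the widths add, giving total height $\sum_b O(n_b) = O(n)$ and total width $\sum_b O(n_b^2) = O(n^2)$; horizontal stretching — always available because every edge still contains a horizontal segment — resolves the finitely many geometric conflicts introduced by the attachment gadgets without breaking these bounds. For the time bound, the BC-tree and all SPQR-trees are computable in $O(n)$ total time, each biconnected piece is drawn in time linear in its size (by the biconnected theorem), and each attachment costs $O(1)$, so the whole algorithm runs in $O(n)$ time using relative coordinates with absolute coordinates computed in one final pass, as in the triconnected algorithm.

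I expect the main obstacle to be the degree-$2$ cut vertex case: there we genuinely have two edges to route into $v_c$ with no bend budget, and one must verify that the local surgery on the two real edges incident to $v_c$ in the S-node's diagonal layout simultaneously (i) frees the east and south-east ports, (ii) keeps the drawing planar and octilinear, and (iii) does not interfere with the fixedness/port invariants of $v_c$'s own pertinent graph or with neighboring children along the same S-node chain. Everything else is bookkeeping over the BC-tree together with the observation that a spare port plus a one-bend bridge is always enough when $\deg_{G_b}(v_c)=3$.
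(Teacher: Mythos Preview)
Your proposal is correct and follows essentially the same approach as the paper: the algorithm you describe (BC-tree decomposition, rooting each $\mathcal{T}_b$ at a Q-node incident to the linking cut vertex, and the case analysis on $\deg_{G_b}(v_c)$ for attaching subtrees) is precisely what the paper lays out in the body of the subsection, and your time and area arguments match the paper's brief proof, which only records that the BC-decomposition is linear and that inserting a subtree of size $n'$ adds $O(n'^2)$ to the width and $O(n')$ to the height. The obstacle you single out---the degree-$2$ cut vertex in an S-node requiring a local redraw to free two ports---is exactly the case the paper handles via the modification around $v_2$ in Fig.~\ref{fig:4p_bc_Snode}, so your anticipated difficulty is real but already resolved by the construction you cite.
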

\begin{proof}
Decomposing a connected graph into its biconnected components takes
linear time. It remains the area property. Inserting a subtree with
$n$ vertices and the given dimensions into the drawing of an R- or
S-node clearly increases the width of the drawing by at most
$O(n^2)$ and the height by at most $O(n)$. Hence, the total drawing
area is cubic, as desired.
\end{proof}

\section{Octilinear Drawings of 5-Planar Graphs}
\label{sec:5planar}

In this section, we focus on planar octilinear drawings of 5-planar
graphs. As in Section~\ref{sec:4planar}, we first consider the case
of triconnected 5-planar graphs and then we extend our approach
first to biconnected and then to the simply connected graphs.

\subsection{The Triconnected Case}
\label{sec:5tricon}

Let $G=(V,E)$ be a triconnected 5-planar graph and $\Pi = \{ P_0,
\ldots, P_m\}$ be a canonical order of $G$. We place the first two
partitions $P_0$ and $P_1$ of $\Pi$, similar to the case of 4-planar
graphs. Again, we assume that we have already constructed a drawing
for $G_{k-1}$ and now we have to place $P_k$, for some
$k=2,\ldots,m-1$. We further assume that the $x$- and
$y$-coordinates are computed simultaneously so that the drawing of
$G_{k-1}$ is planar and horizontally stretchable in the following
sense: If $e \in E(G_{k-1})$ is an edge incident to the outer face
of $G_{k-1}$, then there is always a cut which crosses $e$ and can
be utilized to horizontally stretch the drawing of $G_{k-1}$. This
is guaranteed by our construction which makes sure that in each step
the edges incident to the outer face have a horizontal segment. In
other words, one can define a cut through every edge incident to the
outer face of $G_{k-1}$ (\emph{stretchability-invariant}).

\begin{figure}[t]
    \centering
    \begin{minipage}[b]{.32\textwidth}
        \centering
        \subfloat[\label{fig:5p_chain}{}]
        {\includegraphics[width=.9\textwidth]{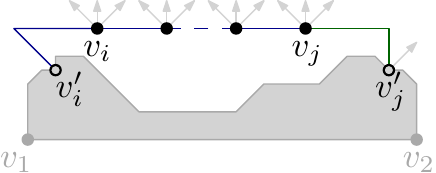}}
    \end{minipage}
    \begin{minipage}[b]{.32\textwidth}
        \centering
        \subfloat[\label{fig:5p_singleton}{}]
        {\includegraphics[width=.9\textwidth]{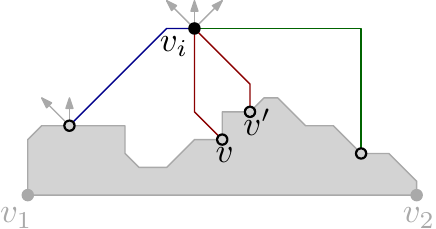}}
    \end{minipage}
    \begin{minipage}[b]{.32\textwidth}
        \centering
        \subfloat[\label{fig:5p_before_final}{}]
        {\includegraphics[width=.9\textwidth,page=1]{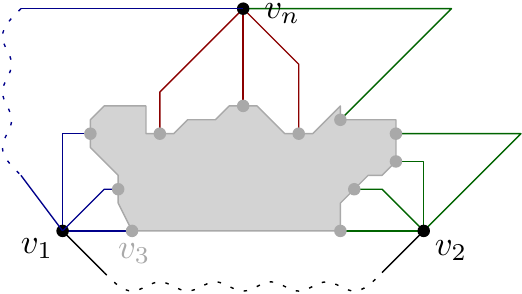}}
    \end{minipage}
    \caption{
    (a)~Horizontal placement of a chain $P_k = \{ v_i, \ldots, v_j\}$.
    (b)~Placement of a singleton $P_k = \{ v_i \}$ of degree five.
    (c)~Final layout (the shape of the dotted edges can be obtained by extending the stubs until they intersect).}
    \label{fig:4p_canonical}
\end{figure}

If $P_k = \left\{v_i,\ldots,v_j\right\}$ is a chain, it is placed
exactly as in the case of 4-planar graphs, but with different port
assignment. Recall that by $v_{i}'$ ($v_{j}'$, resp.) we denote the
neighbor of $v_i$ ($v_j$, resp.) in $G_{k-1}$. Among the northern
available ports of vertex $v_i'$ ($v_{j}'$, resp.), edge $(v_i,
v_i')$ ($(v_j, v_j')$, resp.) uses the eastern-most unoccupied port
of $v_{i}'$ (western-most unoccupied port of $v_{j}'$, resp.); see
Fig.\ref{fig:5p_chain}. If $P_k$ does not fit into the gap between
its two adjacent vertices $v_{i}'$ and $v_{j}'$ in $G_{k-1}$, then
we horizontally stretch $G_{k-1}$ between $v_{i}'$ and $v_{j}'$ to
ensure that the horizontal distance between $v_{i}'$ and $v_{j}'$ is
at least $|P_{k}| + 1$. This can always be accomplished due to the
stretchability-invariant, as both $v_{i}'$ and $v_{j}'$ are on the
outer face of $G_{k-1}$. Potential crossings introduced by edges of
$P_k$ containing diagonal segments can be eliminated by employing
similar cuts to the ones presented in the case of 4-planar graphs.
So, we may assume that $G_{k}$ is plane. Also, $G_{k}$ complies with
the stretchability-invariant, as one can define a cut that crosses
any of the newly inserted edges of $P_k$ and then follows one of the
cuts of $G_{k-1}$ that crosses an edge between $v_{i}'$ and
$v_{j}'$.

In case of a singleton $P_k = \{ v_i \}$ of degree $3$ or $4$, our
approach is very similar to the one of the case of 4-planar graphs.
Here, we mostly focus on the case where $v_i$ is of degree five. In
this case, we have to deal with two additional edges (called
\emph{nested}) that connect $v_i$ with $G_{k-1}$, say $(v_i, v)$ and
$(v_i,v')$; see Fig.\ref{fig:5p_singleton}. Such a pair of edges
does not always allow vertex $v_i$ to be placed along the next
available horizontal grid line; $v_i$'s position is more or less
prescribed, as each of $v$ and $v'$ may have only one northern  port
unoccupied. However, a careful case analysis on the type of ports
(i.e., north-west, north or north-east) that are unoccupied at $v$
and $v'$ in conjunction with the fact that $G_{k-1}$ is horizontally
stretchable shows that we can always find a feasible placement for
$v_i$ (usually far apart from $G_{k-1}$). Potential crossings due to
the remaining edges incident to $v_i$ are eliminated by employing
similar cuts to the ones presented in the case of 4-planar graphs.
So, we may assume that $G_{k}$ is planar. Similar to the case of a
chain, we prove that $G_{k}$ complies with the
stretchability-invariant. In this case special attention should be
paid to avoid crossings with the nested edges of $v_i$, as a nested
edge may contain no horizontal segment. Note that the case of the
last partition $P_m=\{v_n\}$ is treated in the same way, even if
$v_n$ is potentially incident to three nested edges; see
Fig.\ref{fig:5p_before_final}.

To complete the description of our approach it remains to describe
how edge $(v_1,v_2)$ is drawn. By construction both $v_1$ and $v_2$
are along a common horizontal line. So, $(v_1,v_2)$ can be drawn
using two diagonal segments that form a bend pointing downwards; see
Fig.\ref{fig:5p_before_final}.

\begin{theorem}
Given a triconnected 5-planar graph $G$, we can compute in $O(n^2)$
time an octilinear drawing of $G$ with at most one bend per edge.
\end{theorem}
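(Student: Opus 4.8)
The plan is to turn the preceding informal description into a proof by (1) establishing correctness of the drawing, i.e.\ that the incremental procedure always produces a planar octilinear drawing with at most one bend per edge, and (2) bounding the running time by $O(n^2)$. For correctness I would argue by induction on $k$ that $G_k$ is a planar octilinear drawing satisfying the stretchability-invariant and that every edge introduced so far has at most one bend. The base case ($G_1$, the first two partitions on a horizontal line) is immediate, since all edges are bendless horizontal segments and a cut can be placed through each of them. For the inductive step I would separate the three situations already sketched: a chain $P_k$, a singleton of degree $3$ or $4$, and the degree-five singleton (the last partition $P_m=\{v_n\}$ being handled as a — possibly triply nested — special case of the latter). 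In each case I must check three things: (a) that each new edge is drawn with at most one bend (a chain edge is horizontal and bendless; an edge from $v_i$ or $v_j$ down to $G_{k-1}$ uses one diagonal and one horizontal segment, hence one bend; a nested edge is drawn with a single bend by the case analysis on which northern port of $v$ or $v'$ is free); (b) that no crossings are created, which follows because $v_{i}',v_{j}'$ (resp.\ $v,v'$) lie on the outer face of $G_{k-1}$ by property~(ii) of the canonical order, and any crossing caused by a diagonal segment can be removed by a horizontal stretch along a cut, using the stretchability-invariant; and (c) that $G_k$ again satisfies the stretchability-invariant, which I would verify by explicitly exhibiting, for every edge on the outer face of $G_k$, a $y$-monotone cut: for a newly inserted edge with a horizontal segment the cut starts on that segment and then merges with a cut of $G_{k-1}$ crossing an edge between $v_{i}'$ and $v_{j}'$ (which exists by the invariant for $G_{k-1}$). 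Finally, the edge $(v_1,v_2)$ is added last: since $v_1$ and $v_2$ are on a common horizontal line, it is drawn as two diagonals meeting in one downward bend, contributing one more bend and no crossing.

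The most delicate point, and the one I expect to be the main obstacle, is the degree-five singleton with its two nested edges $(v_i,v)$ and $(v_i,v')$. Here $v_i$ cannot simply be placed on the next free grid line, because each of $v$ and $v'$ offers only one free northern port, so the slopes of the two nested edges — and hence the admissible position of $v_i$ — are essentially forced. The argument must be a genuine case analysis over which of \{north-west, north, north-east\} is the free port at $v$ and at $v'$ respectively, showing that in every combination one can choose an $x$-coordinate and a (sufficiently large) $y$-coordinate for $v_i$ so that both nested edges are realizable with one bend each and without crossing the part of $G_{k-1}$ lying between $v$ and $v'$; the freedom needed for this comes from first horizontally stretching $G_{k-1}$ via cuts through the relevant outer-face edges. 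The same analysis, applied with three nested edges, covers $P_m=\{v_n\}$. I would also note that a nested edge may consist of a single vertical segment and thus carry no horizontal segment, so when re-establishing the stretchability-invariant for $G_k$ the cuts must be routed to avoid nested edges — this is the place where the phrase ``special attention should be paid'' in the sketch has to be made precise.

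For the running time, the difference from the triconnected $4$-planar algorithm (which ran in $O(n)$ time) is exactly that we can no longer afford the trick of spacing grid columns $n$ apart a priori, because the forced placement of degree-five singletons interacts with the stretching. Instead, at each of the $O(n)$ steps we may perform a horizontal stretch whose bookkeeping — updating the cut structure and the relative $x$-coordinates so that the stretchability-invariant is maintained — costs $O(n)$ time, giving $O(n^2)$ overall; the final absolute coordinates are then computed in one pass. I would remark that this is why the theorem claims only $O(n^2)$ time and, unlike the $4$-planar case, makes no polynomial-area guarantee: the repeated stretches needed to accommodate nested edges can blow the width up super-polynomially, which is consistent with the abstract's statement that the $5$-planar construction may require super-polynomial area.
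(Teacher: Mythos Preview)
Your proposal is correct and follows essentially the same route as the paper. The paper's formal proof is in fact very terse: correctness of the drawing (planarity, one bend per edge, maintenance of the stretchability-invariant, the degree-five case analysis) is handled in the algorithmic description \emph{preceding} the theorem, and the proof proper only observes that Kant's shifting method no longer applies because $x$- and $y$-coordinates are now interdependent, and that each cut can be computed in linear time, yielding $O(n^2)$ overall---which is exactly your running-time argument.
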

\begin{proof}
Unfortunately, we can no longer use the shifting method of
Kant~\cite{Kant92b}, since the $x$- and $y$-coordinates are not
independent. However, the computation of each cut can be done in
linear time, which implies that our drawing algorithm needs $O(n^2)$
time in total.
\end{proof}

Recall that when placing a singleton $P_k = \left\{v_i\right\}$ that
has four edges to $G_{k-1}$, the height of $G_k$ is determined by
the horizontal distance of its neighbors along the outer face of
$G_{k-1}$, which is bounded by the actual width of the drawing of
$G_{k-1}$. On the other hand, when placing a chain $P_k$ the amount
of horizontal stretching required in order to avoid potential
crossings is delimited by the height of the drawing of $G_{k-1}$.
Unfortunately, this connection implies that for some input
triconnected 5-planar graphs our drawing algorithm may result in
drawings of super-polynomial area, as the following theorem
suggests.

\begin{figure}[t]
    \centering
    \includegraphics[width=.3\textwidth]{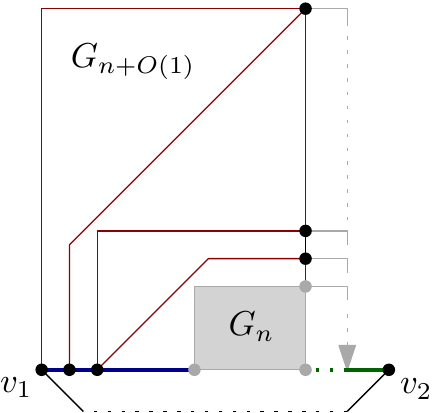}
    \caption{A recursive construction of an infinite class of 5-planar graphs requiring super-polynomial drawing area.}
    \label{fig:5p_exparea}
\end{figure}

\begin{theorem}
There exist infinitely many triconnected 5-planar graphs for which
our drawing algorithm produces drawings of super-polynomial area.
\label{thm:5planarExp}
\end{theorem}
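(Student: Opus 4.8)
The plan is to make the recursive construction of Fig.~\ref{fig:5p_exparea} explicit and then track how the width and height of the intermediate drawings evolve as the canonical order is processed. I would define a sequence of triconnected $5$-planar graphs $H_1, H_2, \ldots$ where $H_{\ell+1}$ is obtained from $H_{\ell}$ by a fixed gadget that alternates the two ``bad'' situations identified in the paragraph immediately before the theorem: a degree-four (or degree-five) singleton whose two neighbors on the outer face are far apart, forcing the height of the new drawing to jump to (a constant times) the \emph{current width}; followed by a chain whose diagonal segments force a horizontal stretch proportional to the \emph{current height}. The key point is that our algorithm has no freedom here: the $y$-coordinate of the singleton is essentially prescribed (each of its two lower neighbors has at most one free northern port, as noted in the triconnected $5$-planar construction), and the stretch needed to route a chain's diagonal edges without crossings is genuinely lower-bounded by the height it must bridge.

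Concretely, I would let $w_\ell$ and $h_\ell$ denote the width and height of the drawing our algorithm produces for $H_\ell$, with the canonical order inherited from the recursive construction so that the gadget for level $\ell+1$ is placed after all of $H_\ell$. The first half of the gadget (the nested-edge singleton) gives a recurrence of the form $h_{\ell+1} \ge c_1 \, w_\ell$ for some constant $c_1 > 0$, because the singleton must be placed high enough above $G_{k-1}$ that its two nested edges, attached to the only available northern ports of two vertices whose horizontal distance is $\Theta(w_\ell)$, can be drawn octilinearly with one bend — and an octilinear edge with one bend spanning horizontal distance $d$ spans vertical distance $\Omega(d)$. The second half (the chain with diagonal edges) gives $w_{\ell+1} \ge c_2 \, h_{\ell+1}$, since a diagonal segment bridging height $h$ needs horizontal room $\Omega(h)$ and our construction routes it that way. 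Composing, $w_{\ell+1} \ge c_2 c_1 \, w_\ell$ would only give exponential growth; to get genuinely super-polynomial growth in $n$ I would instead nest the gadgets so that each level \emph{contains} a constant number of copies of the previous level inside a face, so that one level of recursion multiplies the linear dimension by a constant but multiplies the vertex count by only an additive constant — making the dimension $2^{\Omega(\ell)}$ while $n = \Theta(\ell)$, i.e.\ dimension $2^{\Omega(n)}$ — or, more carefully matching Fig.~\ref{fig:5p_exparea}, arrange that level $\ell$ has $n_\ell = \Theta(n_{\ell-1} + \ell)$ vertices but forces dimension $\ge \ell! \cdot (\text{something})$, yielding $n^{\Omega(\log n)}$ area, which already beats every polynomial. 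I would pick whichever nesting the figure actually realizes; the essential combinatorial content is the alternating $h \gtrsim w$, $w \gtrsim h$ coupling together with a vertex budget that grows much more slowly than the geometric blow-up it forces.

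The main obstacle, and the part that needs genuine care rather than bookkeeping, is proving the lower bounds $h_{\ell+1} \ge c_1 w_\ell$ and $w_{\ell+1} \ge c_2 h_{\ell+1}$ \emph{robustly}, i.e.\ against everything the algorithm might do, not merely in the schematic picture. For the singleton step one must verify that the port assignment rule of Section~\ref{sec:5tricon} (eastern-most / western-most free northern port) genuinely leaves the two lower neighbors of the nested singleton with ports on ``opposite sides'', so their connecting edges to $v_i$ must both climb, and that no horizontal stretching of $G_{k-1}$ performed \emph{after} placing earlier levels can retroactively shrink the relevant distance — here the stretchability-invariant is a double-edged sword and I would need to argue the distance only ever grows. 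For the chain step one must show the diagonal edges cannot be ``folded'' to avoid the stretch: this is where I would invoke that each such edge has exactly one horizontal and (at most) one diagonal segment by construction and that the diagonal segment's two endpoints differ in $y$ by $\Theta(h_{\ell+1})$, so its $x$-extent is $\Theta(h_{\ell+1})$ and cannot overlap horizontally with the nested edges in the face below (which have no horizontal segment to cut). Finally I would confirm the graphs are triconnected (so the embedding is fixed up to outer face, and our algorithm is the one producing the drawing) and $5$-planar by inspecting the gadget's degrees, and then conclude by unrolling the recurrence to get the claimed super-polynomial bound on $w_\ell h_\ell$ against $n = n_\ell$.
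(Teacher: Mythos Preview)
Your approach is essentially the paper's: build the family recursively, with each level adding a gadget that first forces the height to jump to a constant times the current width (via a singleton whose nested neighbors sit at the extremes of the outer face) and then forces the width to jump to a constant times the new height (via a chain whose diagonal segment must bridge that height). The paper states this as $W_{n+O(1)} > 2W_n$ and $H_{n+O(1)} > 2H_n$ and stops there.

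Where you go astray is in the paragraph beginning ``Composing, $w_{\ell+1} \ge c_2 c_1\, w_\ell$ would only give exponential growth.'' Exponential growth \emph{is} super-polynomial; there is nothing further to do once you know each level adds $O(1)$ vertices and at least doubles the linear dimensions. Your proposed fix --- ``nest the gadgets so that each level contains a constant number of copies of the previous level'' --- is both unnecessary and wrong: containing copies would multiply the vertex count by a constant per level, making $n$ grow exponentially in $\ell$ as well and buying you nothing. The paper's construction (and the one you describe first) simply \emph{adds} a constant number of new vertices per level, so $n = \Theta(\ell)$ and the area is $2^{\Omega(n)}$ directly. Drop the detour through $n^{\Omega(\log n)}$ and the nesting-of-copies idea.

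One genuine point you flag that the paper glosses over: for the recurrence to blow up you need the product $c_1 c_2 > 1$, not merely $c_1, c_2 > 0$. The paper asserts a factor of $2$ per level without argument (``it is not difficult to show''); your more careful tracking of ports and the stretchability-invariant is the right instinct if you want to actually pin down that constant, and is more than the paper itself provides.
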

\begin{proof}
Fig.\ref{fig:5p_exparea} illustrates a recursive construction of an
infinite class of 5-planar triconnected graphs with this property.
The base of the construction is a ``long chain'' connecting $v_1$
and $v_2$ (refer to the bold drawn edges of
Fig.\ref{fig:5p_exparea}). Each next member, say $G_{n+O(1)}$, of
this class is constructed by adding a constant number of vertices
(colored black in Fig.\ref{fig:5p_exparea}) to its immediate
predecessor member, say $G_{n}$, of this class, as illustrated in
Fig.\ref{fig:5p_exparea}. If $W_n$ and $H_n$ is the width and the
height of $G_n$, respectively, then it is not difficult to show that
$W_{n+O(1)}>2W_{n}$ and $H_{n+O(1)}>2H_{n}$, which implies that the
required area is asymptotically exponential.
\end{proof}

\subsection{The Biconnected Case}
\label{sec:bicon}

For the 4-planar case we defined several invariants in order to keep
the area of the resulting drawings polynomial. Since we drop this
requirement now we can define a (simpler) new invariant for the
biconnected 5-planar case. When considering a node $\mu$ in
$\mathcal{T}$ and its poles $\mathcal{P}_\mu = \{ s, t \}$, then in
the drawing of $\pert{\mu}$, $s$ and $t$ are horizontally aligned at
the bottom of the drawing's bounding box as in
Fig.\ref{fig:5p_layout}. If an $(s,t)$-edge is present, it can be
drawn at the bottom. An $(s,t)$-edge only occurs in the pertinent
graph of a P-node (and Q-node). Again, we use the term \emph{fixed}
for a pole-node that is not allowed to form a nose. We maintain the
following properties through the recursive construction process: In
S- and R- nodes, $s$ and $t$ are not fixed. In P- and Q-nodes, only
one of them is fixed, say $s$. But similar to the 4-planar
biconnected case, we may swap their roles.

\begin{figure}[t]
    \centering
    \begin{minipage}[b]{.24\textwidth}
        \centering
        \subfloat[\label{fig:5p_layout}{}]
        {\includegraphics[width=.95\textwidth,page=1]{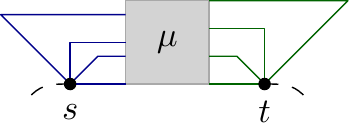}}
    \end{minipage}
    \begin{minipage}[b]{.24\textwidth}
        \centering
        \subfloat[\label{fig:5p_pnode}{}]
        {\includegraphics[width=.95\textwidth,page=3]{5p_bicon_new}}
    \end{minipage}
    \begin{minipage}[b]{.45\textwidth}
        \centering
        \subfloat[\label{fig:5p_snode}{}]
        {\includegraphics[width=.95\textwidth,page=2]{5p_bicon_new}}
    \end{minipage}
    \caption{
    (a)~Layout specification; $s$ and $t$ are located at the bottom.
    (b)~P-node with an $(s,t)$-edge from a Q-node $\mu_1$. $s$ and $t$ form a nose in $\mu_2, \mu_3$.
    (c)~S-node example with four children $\mu_1, \ldots,  \mu_4$.}
    \label{fig:5p_bicon}
\end{figure}

\begin{description}
\item[P-node case:] Let $\mu$ be a P-node. It is not difficult to see
that $\mu$ has at most $4$ children; one of them might be a Q-node,
i.e., an $(s,t)$-edge, which can be drawn at the bottom as a
horizontal segment. Since P-nodes are not adjacent to each other in
$\mathcal{T}$, the remaining children are S- or R-nodes. By our
invariant we may form noses enabling us to stack them as in
Fig.\ref{fig:5p_pnode}, as $s$ and $t$ are not fixed in them.
\item[S-node case:] Let $\mu$ be an S-node with children $\mu_1,
\ldots, \mu_l$. Instead of the diagonal layout used earlier, we now
align the drawings horizontally; see Fig.\ref{fig:5p_snode}. In the
S-node case, the poles inherit their pertinent degree from the
children and the same holds for the property of being fixed.
However, by our new invariant this is forbidden, as it clearly
states that $s$ and $t$ are not fixed. It is easy to see that when
$\mu_1$ is a P-node, $s$ is fixed by the invariant in $\mu_1$. In
this case, we swap the roles of the poles in $\mu_1$ such that $s$
is not fixed. However, the other pole of $\mu_1$, say $v_1$, is
fixed now. Since the skeleton of an S-node is a cycle of length at
least three, $v_1 \neq t$ holds. As a result, both $s$ and $t$ are
not fixed in the resulting drawing.
\item[R-node case:] To compute a layout of an R-node, we employ the
triconnected algorithm (with $s = v_1$ and $t =v_2$). So, let $\mu$
be an R-node and $\mu_{e}$ a child of $\mu$ that corresponds to
the virtual edge $e = (u,v)$ in $\skel{\mu}$. Then, $\pdeg{\mu_{e}}{u}
\leq 3$ and $\pdeg{\mu_{e}}{v} \leq 3$ holds. When inserting the
drawing of $\pert{\mu_e}$, we require at most three consecutive
ports at $u$ and $v$ for the additional edges. As the triconnected
algorithm assigns ports in a consecutive manner based on the
relative position of the endpoints, we modify the port assignment so
that an edge may have more than one port assigned. To do so, we
assign each edge $e = (u,v)$ in $\skel{\mu}$ a pair
$(\pdeg{\mu_e}{u}, \pdeg{\mu_e}{v}) \in \{ 1, 2, 3 \}^2$ that
reflects the number of ports required by this edge at its endpoints.
Then, we extend the triconnected algorithm such that when a port of
$u$ is assigned to an edge $e = (u,v)$, $\pdeg{\mu_e}{u}-1$
additional consecutive ports in clockwise or counterclockwise order
are reserved. The direction depends on the different types of edges
that we will discuss next.

\begin{figure}[t]
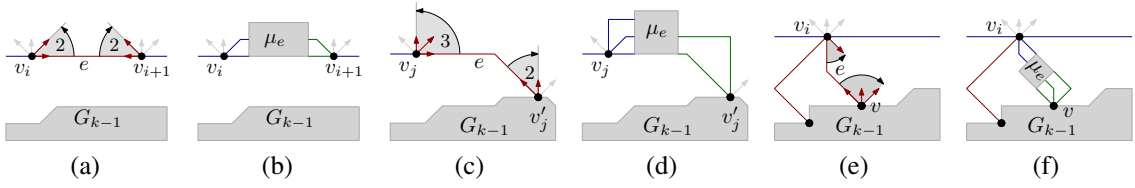

    \centering
     \begin{minipage}[b]{.16\textwidth}
        \centering
        \subfloat[\label{fig:5p_port_reserv_chain}{}]
        {\includegraphics[width=.9\textwidth,page=6]{5p_bicon_new}}
    \end{minipage}
    \begin{minipage}[b]{.16\textwidth}
        \centering
        \subfloat[\label{fig:5p_port_reserv_chain_inserted}{}]
        {\includegraphics[width=.9\textwidth,page=7]{5p_bicon_new}}
    \end{minipage}
    \begin{minipage}[b]{.16\textwidth}
        \centering
        \subfloat[\label{fig:5p_port_reserv_leg}{}]
        {\includegraphics[width=.9\textwidth,page=4]{5p_bicon_new}}
    \end{minipage}
    \begin{minipage}[b]{.16\textwidth}
        \centering
        \subfloat[\label{fig:5p_port_reserv_leg_inserted}{}]
        {\includegraphics[width=.9\textwidth,page=5]{5p_bicon_new}}
    \end{minipage}
    \begin{minipage}[b]{.16\textwidth}
        \centering
        \subfloat[\label{fig:5p_port_reserv_singleton}{}]
        {\includegraphics[width=.9\textwidth,page=8]{5p_bicon_new}}
    \end{minipage}
    \begin{minipage}[b]{.16\textwidth}
        \centering
        \subfloat[\label{fig:5p_port_reserv_singleton_inserted}{}]
        {\includegraphics[width=.9\textwidth,page=9]{5p_bicon_new}}
    \end{minipage}
    \caption{
    (a)~Virtual edge $e = (v_{i}, v_{i+1})$ connecting two consecutive vertices of a chain. At both endpoints the drawing of $\mu_e$ requires two ports.
    (b)~Replacing $e$ in (a) with the corresponding drawing of the child $\mu_e$.
    (c)~Example of an edge $e = (v_{j},v_{j}')$ that requires three ports at $v_{j}$ and two at  $v_{j}'$.
    (d)~Inserting the drawing of $\mu_e$ into (c) with $v_{j}$ being fixed and $v_{j}'$ forming a nose.
    (e)~Reserving ports for the nested edges. A single port for a real edge is reserved and then two ports for the virtual edge e = $(v_i, v)$.
    (f)~Final layout after inserting the drawing of $\mu_e$.}
    \label{fig:5p_bicon_R}
\end{figure}

The simplest type of edges are the ones among consecutive vertices
$v_i, v_{i+1}$ of a chain. Recall that $P_0 = \{ v_1, v_2 \} $ is a
special case and the edge $(v_1, v_2)$ is drawn differently. Also,
the edges from $P_0$ to $P_1$ are drawn as horizontal segments; see
Fig.\ref{fig:5p_before_final}. For each such edge we reserve the
additional ports at $v_i$ in counter-clockwise order and at
$v_{i+1}$ in clockwise order; see
Fig.\ref{fig:5p_port_reserv_chain}. So, we can later plug the
drawing of the children into the layout as in
Fig.\ref{fig:5p_port_reserv_chain_inserted} without forming noses.
The second type of edges are the ones that connect $P_k =
\left\{v_i,\ldots,v_j\right\}$ to $v_{i}'$  and $v_{j}'$ in
$G_{k-1}$. No matter if $P_k$ is a singleton or a chain, we proceed
by reserving the ports as in the previous case, i.e., at $v_i$
clockwise, ($v_j$ counter-clockwise, resp.) and at $v_{i}'$
counter-clockwise ($v_{j}'$ clockwise); see
Fig.\ref{fig:5p_port_reserv_leg}. In case where $(v_i, v_{i}')$ or
$(v_j, v_{j}')$ is a virtual edge, we choose the poles such that
$v_i$ ($v_j $ resp.) is fixed in $\mu_{(v_i, v_{i}')}$ ($\mu_{(v_j,
v_{j}')}$ resp.). Thus, we can create a nose with $v_{i}'$ ($v_{j}'$
resp.). Having exactly the ports required at both endpoints, we
insert the drawing by replacing the bend with a nose as in
Fig.\ref{fig:5p_port_reserv_leg_inserted}. The remaining edges from
$P_k$ to $G_{k-1}$ in case of a singleton $P_k = \{ v_i \}$ can be
handled similarly; see Fig.\ref{fig:5p_bicon_R}. Notice that during
the replacement of the edges, the fixed vertex is always the upper
one. The only exception are the horizontal drawn edges of a chain.
There, it does not matter which one is fixed, as none of the poles
has to form a nose.
\item[Root case:] We root $\mathcal{T}$ at an arbitrarily chosen
Q-node representing a real edge $(s,t)$. By our invariant we may
construct a drawing with $s$ and $t$ at the bottom of the drawing's
bounding box, hence, we draw the edge $(s,t)$ below the bounding box
with a ninety degree bend using the south east port at $s$ and south
west port at $t$.
\end{description}

\begin{theorem}
Given a biconnected 5-planar graph $G$, we can compute in $O(n^2)$
time an octilinear drawing of $G$ with at most one bend per edge.
\end{theorem}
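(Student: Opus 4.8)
The plan is to analyze the running time of the recursive drawing procedure described above, node by node in the SPQR-tree $\mathcal{T}$, and show that the total work is $O(n^2)$ while simultaneously confirming that the invariant (poles $s,t$ aligned at the bottom, with the stated fixedness conditions) is maintained everywhere, so that the edge-complexity bound of one bend per edge carries through. First I would invoke the fact (cited from \cite{GM00}) that $\mathcal{T}$ has size linear in $|V|+|E|=O(n)$ and can be computed in $O(n)$ time, and that the pertinent degrees $\pdeg{\mu}{s}$, $\pdeg{\mu}{t}$ of the poles of every node can be precomputed by a single bottom-up traversal. By Lemma~\ref{lem:pdeg_bounds} these degrees are at most $k-1=4$ at poles whose parent is an S- or Q-node and at most $3$ when the parent is a P- or R-node, which is exactly what the P-, S- and R-node constructions above rely on; I would remark that the ``interchangeability'' of the fixed role (swapping $s$ and $t$ together with a mirroring of the sub-drawing) lets every case reduce to the configuration actually drawn in Figures~\ref{fig:5p_pnode}--\ref{fig:5p_bicon_R}.

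Next I would bound the work per node type. A P-node has at most four children (one possibly a Q-node giving the $(s,t)$-edge drawn as a horizontal segment at the bottom, the rest S- or R-nodes), and the construction simply stacks their drawings after forming noses; this is $O(1)$ work per P-node. An S-node with children $\mu_1,\dots,\mu_\ell$ is handled by horizontally concatenating the children's drawings, with the single extra subtlety that if $\mu_1$ is a P-node one swaps the poles to unfix $s$ (legal since the skeleton cycle has length $\ge 3$, so the newly fixed pole of $\mu_1$ is distinct from $t$); this costs $O(\ell)=O(|\Eskel{\mu}|)$. An R-node is drawn by running the triconnected algorithm of Section~\ref{sec:5tricon} on $\skel{\mu}$, augmented with the port-reservation bookkeeping: each virtual edge $e=(u,v)$ carries a pair $(\pdeg{\mu_e}{u},\pdeg{\mu_e}{v})\in\{1,2,3\}^2$ and, when a port is assigned, $\pdeg{\mu_e}{u}-1$ further consecutive ports are reserved clockwise or counter-clockwise according to the edge type (chain edge, leg edge $(v_i,v_i')$, or nested edge), exactly as in Figures~\ref{fig:5p_port_reserv_chain}--\ref{fig:5p_port_reserv_singleton_inserted}. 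This reservation is $O(1)$ per skeleton edge, so an R-node costs $O(|\Eskel{\mu}|)$ plus the cost of the underlying triconnected routine on a graph of size $|\Vskel{\mu}|$, which by the theorem of Section~\ref{sec:5tricon} is $O(|\Vskel{\mu}|^2)$. Summing over all nodes and using $\sum_\mu |\Eskel{\mu}| = O(n)$ (each virtual edge is charged to one tree edge) gives a total of $O(\sum_\mu |\Vskel{\mu}|^2) = O\big((\sum_\mu|\Vskel{\mu}|)^2\big) = O(n^2)$. Finally the root Q-node contributes $O(1)$: the edge $(s,t)$ is drawn below the bounding box with one ninety-degree bend at the south-east port of $s$ and the south-west port of $t$.

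The main obstacle, and the point I would spend the most care on, is verifying that plugging a child's drawing into a parent never violates the one-bend-per-edge guarantee or the stretchability needed for the triconnected subroutine inside an R-node. Concretely: in the R-node case the triconnected algorithm expects each skeleton edge to be drawable with at most one bend and to admit a horizontal stretching cut, but here a skeleton edge is replaced by an entire sub-drawing of $\pert{\mu_e}$; I must argue that the reserved consecutive ports at $u$ and $v$ together with the nose operation on the lower-indexed pole let the sub-drawing be inserted in place of the one-bend edge without introducing a second bend on any already-drawn edge, and that the sub-drawing's own boundary still offers a horizontal segment on which a stretching cut can be placed (so the stretchability-invariant of Section~\ref{sec:5tricon} survives the substitution). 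The nested-edge subcase of a degree-five singleton — where an edge of $\pert{\mu_e}$ may contain no horizontal segment — is the delicate one, and I would handle it by the same careful case analysis on which northern ports are free at the relevant neighbors that was used in the triconnected argument, now applied to the reserved port blocks rather than single ports. Once that insertion lemma is in hand, planarity, the edge-complexity bound, and the invariant all follow by induction on $\mathcal{T}$, and the $O(n^2)$ time bound follows from the accounting above.
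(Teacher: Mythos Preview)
Your proposal is correct and follows the same approach as the paper: correctness of the one-bend invariant is inherited from the node-by-node construction (P-, S-, R-, root cases) described in the text, and the $O(n^2)$ running time comes from the linear-time SPQR-tree computation together with the fact that the work is dominated by the triconnected subroutine inside the R-nodes. Your write-up is considerably more detailed than the paper's two-sentence proof---in particular your per-node accounting and the explicit discussion of why plugging a child drawing into an R-node preserves stretchability---but the underlying argument is the same; one minor point is that your $O(|\Vskel{\mu}|^2)$ charge for an R-node omits the cost of stretching when inserting child drawings, though this is still absorbed by the $O(n^2)$ total since the number of insertions is $\sum_\mu |\Eskel{\mu}| = O(n)$ and each stretch touches at most $O(n)$ vertices.
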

\begin{proof}
We have shown that the ability to rotate and scale suffices to
extend the result from 4-planar to 5-planar at the expense of the
area. Similar to the 4-planar case, computing $\mathcal{T}$ takes
linear time. Hence, the overall runtime is governed by the
triconnected algorithm.
\end{proof}

\subsection{The Simply Connected Case}

In the following, we only outline the differences in comparison with
the corresponding 4-planar case. As an invariant, the drawing of
every subtree should conform to the layout depicted in
Fig.\ref{fig:5p_bc_tree_layout}. For a single biconnected component
$b$, let $v_c$ refer to the cut vertex linking it to the parent. As
root for the SPQR-tree $\mathcal{T}_b$ of $G_b$, we again choose a
Q-node $\mu_r$ whose real edge is incident to $v_c$; see
Fig.\ref{fig:5p_bc_t_root}. Hence, the layout generated by the
biconnected approach matches this scheme.

\begin{figure}[htb]
    \centering
     \begin{minipage}[b]{.24\textwidth}
        \centering
        \subfloat[\label{fig:5p_bc_tree_layout}{}]
        {\includegraphics[width=.9\textwidth,page=1]{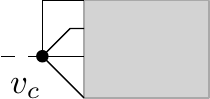}}
    \end{minipage}
     \begin{minipage}[b]{.24\textwidth}
        \centering
        \subfloat[\label{fig:5p_bc_t_root}{}]
        {\includegraphics[width=.9\textwidth,page=2]{5p_bc}}
    \end{minipage}
     \begin{minipage}[b]{.24\textwidth}
        \centering
        \subfloat[\label{fig:5p_bc_tree_s_ports}{}]
        {\includegraphics[width=.9\textwidth,page=3]{5p_bc}}
    \end{minipage}
      \begin{minipage}[b]{.24\textwidth}
        \centering
        \subfloat[\label{fig:5p_bc_tree_single_ports}{}]
        {\includegraphics[width=.9\textwidth,page=4]{5p_bc}}
    \end{minipage}
    \caption{
    (a)~Layout scheme for a BC-subtree rooted at $v_c$.
    (b)~Rooting $\mathcal{T}_b$ at a Q-node $\mu_r$.
    (c)~Attaching a subtree at a chain and in (d)~at a singleton inside an R-node.}
    \label{fig:5p_bc_tree}
\end{figure}

It remains to show that we can attach the children. Since we are
able to scale and rotate, we keep things simple and look for
suitable spots to attach them. Notice that in the drawings of
S-nodes and chains in R-nodes all southern ports are free. Hence, we
may rotate the drawings of the subtrees and attach the at most three
(two for a chain) edges to $v_c$ there (refer to
Fig.\ref{fig:5p_bc_tree_s_ports} for an example of a chain). The
only exception are the singletons. Assume that $v_i$ is an ordinary
singleton that has one nested edge attached. Hence, it has degree
four, leaving us with a single bridge to attach the component;
Fig.\ref{fig:5p_bc_tree_single_ports}. However, this does not hold
in case $v_i = v_n$. Consider the case where $v_n$ has a nested edge
and we have to attach a subtree that requires two ports. As a result
$v_n$ has degree $3$ in $G_b$ and, thus, all northern ports are
free.

\begin{theorem}
Given a connected 5-planar graph $G$, we can compute in $O(n^2)$
time an octilinear drawing of $G$ with at most one bend per edge.
\label{thm:5planarconnected}
\end{theorem}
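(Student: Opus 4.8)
The plan is to mirror the structure of the 4-planar simply connected case (Theorem~\ref{thm:4planarconnected}), adapting it to the 5-planar setting where the area bound has already been sacrificed. First I would compute the BC-tree of $G$ in linear time and root it at an arbitrary B-node. The invariant to maintain is the one described just above the statement: the drawing of every BC-subtree conforms to the layout of Fig.\ref{fig:5p_bc_tree_layout}, with the linking cut vertex $v_c$ placed so that the edges going up to its parent component can be attached cleanly. For a single non-bridge biconnected component $G_b$, I would invoke the biconnected 5-planar algorithm, rooting its SPQR-tree $\mathcal{T}_b$ at a Q-node $\mu_r$ whose real edge is incident to $v_c$ (Fig.\ref{fig:5p_bc_t_root}); by the biconnected invariant this already produces a drawing in which $v_c$ sits at the bottom of the bounding box with the required free ports, so after a rotation/reflection it matches the BC-subtree scheme. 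Bridges are handled trivially: a bridge is a single edge, and we can always afford one bend to route it into a free port of its endpoint.

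The core of the argument is then the case analysis for attaching the child BC-subtrees at an internal cut vertex $v_c$ of $G_b$, exactly as sketched in the paragraph preceding the statement. The key observation is that since we may freely scale and rotate subtree drawings (no area constraint), we only need to exhibit, for each position $v_c$ can occupy in the biconnected layout, enough consecutive free ports pointing into the appropriate face. I would go through the cases: (i) $v_c$ inside an S-node or inside a chain of an R-node, where all southern ports are free, so a rotated subtree drawing can be attached with its two or three upward edges; (ii) $v_c$ an ordinary singleton with a nested edge (degree four in $G_b$), leaving exactly one free port, which suffices since only a bridge can be attached there; and (iii) the special singleton $v_i = v_n$, which, when it carries a nested edge, has degree only $3$ in $G_b$, so all northern ports are free and a subtree requiring two ports can still be accommodated (Fig.\ref{fig:5p_bc_tree_single_ports}). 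In each subcase one checks that inserting the (scaled) subtree introduces no crossings — this follows because the subtree is placed entirely within the free face adjacent to $v_c$ and can be shrunk to fit — and that the resulting drawing again satisfies the BC-subtree invariant.

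For the running time, decomposing $G$ into biconnected components and building the BC-tree takes $O(n)$ time. Each biconnected component is drawn by the biconnected algorithm, whose total cost over all components is $O(n^2)$ since the biconnected bound is $O(n_b^2)$ per component of size $n_b$ and $\sum n_b = O(n)$. Attaching a subtree at a cut vertex is a constant-time local operation plus a rescaling, and there are $O(n)$ such attachments, so the overall bound is $O(n^2)$.

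The main obstacle I expect is the port-availability bookkeeping at cut vertices of degree $2$ or with nested edges — precisely the degenerate configurations already flagged in the preceding paragraph, namely the ordinary degree-four singleton that leaves only a bridge's worth of space and the $v_n$ case whose port budget differs. One must verify that the demand (at most three edges from a subtree, or two from a chain, or one from a bridge) never exceeds the supply of consecutive free ports in the face into which the subtree must be routed, across every node type of the SPQR-tree of $G_b$; since $G$ is $5$-planar this is a finite check, but it is where the argument could break if some configuration were overlooked. The crossing-freeness of the insertion, by contrast, is immediate once scalability is available, since the whole subtree lives inside a single incident face of $v_c$.
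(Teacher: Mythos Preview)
Your proposal is correct and follows essentially the same approach as the paper: the paper's proof is just a two-line summary of the preceding discussion, which is precisely the BC-tree decomposition, the rooting of each $\mathcal{T}_b$ at a Q-node incident to the linking cut vertex, and the same port-availability case analysis (S-nodes/chains with free southern ports, degree-four singletons admitting only a bridge, and the $v_n$ exception) that you spell out. Your running-time argument is slightly more explicit than the paper's, but the content is the same.
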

\begin{proof}
We described how to attach any subtree to cut vertices inside a
biconnected component. Furthermore, the component itself complies
with the layout scheme. In addition, this scheme enables us to
compose such drawings at a cut vertex using rotations.
\end{proof}

\section{A Note on Octilinear Drawings of 6-Planar Graphs}
\label{sec:6planar}
In this section, we show that it is not always possible to construct
a planar octilinear drawing of a given $6$-planar graph with at most
one bend per edge. In particular, we present an infinite class of
6-planar graphs, which do not admit planar octilinear drawings with
at most one bend per edge.

\begin{theorem}
There exists an infinite class of 6-planar graphs which do not admit
planar octilinear drawings with at most one bend per edge.
\label{thm:6planar}
\end{theorem}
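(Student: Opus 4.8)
The plan is to reduce the question to one local fact about octilinear $3$-cycles and then force that fact to fire no matter how the graph is embedded. The engine is the following claim: \emph{in any planar octilinear drawing with at most one bend per edge, if $C=xyz$ is a $3$-cycle of $G$ whose bounded side carries, besides the three edges of $C$, at least $11$ further edge-ends incident to $x,y,z$, then the drawing cannot exist.} To prove it, view $C$ as a Jordan curve consisting of three vertex-corners and at most three bend-corners (one per edge of $C$), so the total turn along the boundary of the bounded region $R$ it encloses is $+360^\circ$. Since the edges at a vertex occupy pairwise distinct ports, the interior angle of $R$ at $x$ is at least $(d_x-1)\cdot 45^\circ$, where $d_x$ is the number of edges of $x$ on the $R$-side (the two $C$-edges plus the others); with the $11$ edge-ends distributed as evenly as the hypothesis allows, $(d_x,d_y,d_z)=(6,6,5)$, so the three interior angles sum to at least $630^\circ$, the three vertex-turns sum to at most $-90^\circ$, and the bends must supply a turn of at least $450^\circ$. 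But a bend-corner has interior angle a positive multiple of $45^\circ$ different from $180^\circ$, hence turn at most $135^\circ$, so three bends supply at most $405^\circ<450^\circ$: contradiction. This is precisely where maximum degree $6$ is needed — with maximum degree $5$ the most one can force on the bounded side is $(d_x,d_y,d_z)=(5,5,5)$, an interior-angle sum of only $540^\circ$ and a required bend turn of $360^\circ\le 405^\circ$, consistent with the positive result of Section~\ref{sec:5planar}.

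The second step is the construction. I build a planar $6$-planar graph $G$ with two vertex-disjoint separating $3$-cycles $T_{\mathrm{out}}=\{a,b,c\}$ and $T_{\mathrm{in}}=\{d,e,f\}$ such that $T_{\mathrm{out}}$ separates a gadget $Y$ from everything else, $T_{\mathrm{in}}$ separates a gadget $X$ from everything else, and $T_{\mathrm{in}},X$ lie in the non-$Y$ component of $G-T_{\mathrm{out}}$ while $T_{\mathrm{out}},Y$ lie in the non-$X$ component of $G-T_{\mathrm{in}}$; thus the pieces are nested as $Y\,\mid\,T_{\mathrm{out}}\,\mid\,T_{\mathrm{in}}\,\mid\,X$, joined through the middle by a short low-degree chain. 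At $a,b,c$ I put degree exactly $6$ with their non-cycle edges going to $Y$ in amounts $4,4,3$, leaving a single non-cycle edge at $c$ that runs through the chain to $T_{\mathrm{in}}$; symmetrically at $d,e,f$ the non-cycle edges go to $X$ in amounts $4,4,3$, with one edge running back toward $T_{\mathrm{out}}$. The gadgets $X$ and $Y$ are triangulated $6$-planar disks of arbitrary size that absorb these $11$ edges on low-degree boundary vertices; varying the size of $Y$ (equivalently, nesting $k$ such sandwiches) yields the infinite class.

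Now fix any planar octilinear drawing of $G$ with at most one bend per edge and let $f^\ast$ be its outer face. The cycle $T_{\mathrm{out}}$ is drawn as a Jordan curve with $Y$ entirely on one side and $T_{\mathrm{in}}\cup X$ entirely on the other; if the $Y$-side is bounded, the engine claim applies at $T_{\mathrm{out}}$ (its vertices have $4+4+3=11$ further edges there, i.e.\ $(d_a,d_b,d_c)=(6,6,5)$) and we are done, so $f^\ast$ must lie on the $Y$-side of $T_{\mathrm{out}}$; by the same reasoning at $T_{\mathrm{in}}$, $f^\ast$ must lie on the $X$-side of $T_{\mathrm{in}}$. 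But the $Y$-side of $T_{\mathrm{out}}$ and the $X$-side of $T_{\mathrm{in}}$ are the two extreme regions of the nesting $Y\mid T_{\mathrm{out}}\mid T_{\mathrm{in}}\mid X$, hence disjoint, so the single connected face $f^\ast$ cannot lie in both — a contradiction. Therefore $G$ admits no such drawing, and since $G$ can be made arbitrarily large this gives an infinite class, proving the theorem.

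The one genuinely delicate point is calibrating the construction so that the engine claim bites with strict inequality: the split $4,4,3$ gives $\sum_i(d_i-1)=14$, interior angles $\ge 630^\circ$, and a deficit of $450^\circ$ against a bend budget of $405^\circ$, whereas splits such as $4,4,2$ or $4,3,3$ give only $\sum_i(d_i-1)=13$, interior angles $\ge 585^\circ$, and a tight $405^\circ$ versus $405^\circ$ with no contradiction — so the gadget must be engineered to route exactly one non-cycle edge of each triangle to the middle region. The remaining care points — that $X$ and $Y$ really can be realized as $6$-planar disks absorbing eleven edges without crossings, and that in the engine claim a bend on a $C$-edge at $x$ is a separate corner so that $x$'s own sector still must host all $d_x$ of its incident edges — are routine, and no property of bend-corners beyond ``nonzero multiple of $45^\circ$, not $180^\circ$'' is used.
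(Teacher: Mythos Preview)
Your proof is correct and rests on the same core observation as the paper --- that a triangle whose three vertices carry $(6,6,5)$ incident edges on its bounded side cannot be realized with one bend per edge --- but you arrive at it and deploy it differently. The paper states this observation without proof (calling it ``simple'') and then exhibits a single maximal $6$-planar graph in which \emph{every} face is a triangle with at least two degree-$6$ vertices and at most one degree-$5$ vertex; since any planar drawing of it has a triangular outer face with this property, the observation fires directly at the outer face, and the infinite class comes from a recursive gluing of copies. You instead supply an explicit angle-sum proof of the observation (a genuine addition, since the paper omits it), and your construction is more local: two disjoint separating triangles, each loaded with eleven heavy edge-ends pointing toward opposite ends of a nested structure, so that whichever triangle has its heavy side unbounded, the \emph{other} triangle's heavy side is forced to be bounded and the observation fires there. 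Your approach makes the $(6,6,5)$ threshold and the $4{+}4{+}3$ split completely transparent and keeps the gadgets $X,Y$ essentially arbitrary, at the cost of leaving their realizability as a routine exercise; the paper's approach is more rigid but fully concrete from a figure. Both are valid proofs of the theorem.
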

\begin{proof}
Our proof is heavily based on the following simple observation: If
the outer face $\mathcal{F}(\Gamma(G))$ of a given planar octilinear
drawing $\Gamma(G)$ consists of exactly three vertices, say $v,v'$
and $v''$, that have the so-called \emph{outerdegree-property},
i.e., $deg(v) = deg(v') = 6$ and $5 \leq deg(v'') \leq 6$, then it
is not feasible to draw all edges delimiting
$\mathcal{F}(\Gamma(G))$ with at most one bend per edge; one of them 
has to be drawn with (at least) two bends in $\Gamma(G)$. Next, we
construct a specific maximal 6-planar graph, in which each face has
at most one vertex of degree $5$ and at least two vertices of degree
$6$; see Fig.\ref{fig:6p_twobends}. This specific graph does not
admit a planar octilinear drawing with at most one bend, as its
outerface is always bounded by three vertices that have the
outerdegree-property.

To obtain an infinite class of 6-planar graphs with this property,
we give the following recursive construction. Let $G_1$ and $G_2$ be
two copies of the graph of Fig.\ref{fig:6p_twobends}. Let also $f_i$
be a bounded face of $G_i$, $i=1,2$. We proceed to subdivide each
edge of $f_i$ by introducing a new vertex on it. We further assume
that the new vertices of $f_i$ are pairwise adjacent (see the top
part of Fig.\ref{fig:6p_construction}). Hence, they form a
triangular face, say $f_i'$, in the augmented graph, say
$G_i^{aug}$, constructed in this manner. Up to now, each of the
newly introduced vertices is of degree four. Now, assume that
$G_1^{aug}$ is drawn on the plane so that $f_1'$ is a bounded face
in $\Gamma(G_1^{aug})$, and $G_2^{aug}$ is drawn such that $f_2'$ is
the unbounded face in $\Gamma(G_2^{aug})$. By choosing $f_2'$ as the
outer face in $\Gamma(G_2^{aug})$ we make sure that we can connect
the three degree four vertices of $f_2'$ to the three degree four
vertices of $f_1'$ in the following way: We appropriately scale down
$\Gamma(G_2^{aug})$ and proceed to draw it in the interior of $f_1'$
without introducing any crossings (see the small gray-colored
triangle of the bottom drawing of Fig.\ref{fig:6p_construction}). If
we connect the vertices of $f_1'$ and $f_2'$ in an ``octahedron-like
manner'', then all vertices of $f_1'$ and $f_2'$ are of degree $6$
and the resulting graph, say $G_1^{aug} \oplus G_2^{aug}$, is
maximal $6$-planar and has the outerdegree-property.
\end{proof}

\begin{figure}[t]
    \centering
    \begin{minipage}[b]{.48\textwidth}
        \centering
        \subfloat[\label{fig:6p_twobends}{}]
        {\includegraphics[width=\textwidth,page=1]{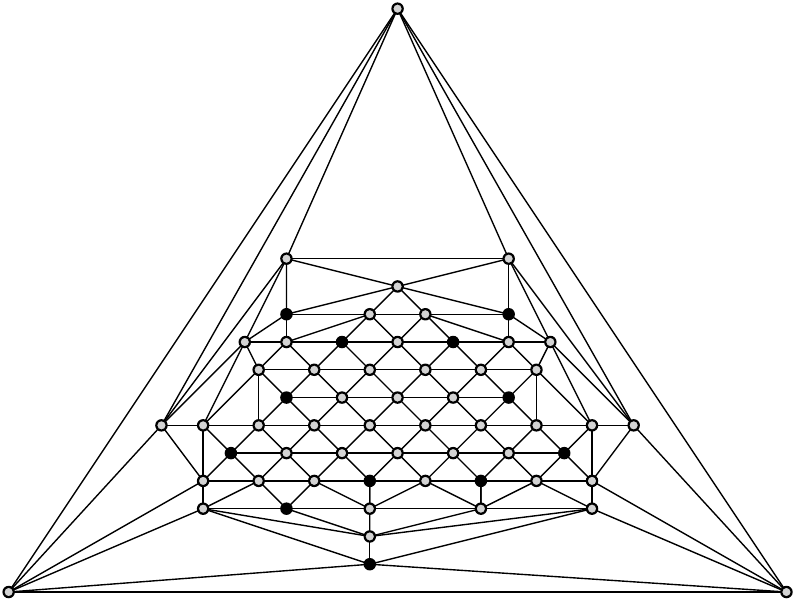}}
    \end{minipage}
    \begin{minipage}[b]{.48\textwidth}
        \centering
        \subfloat[\label{fig:6p_construction}{}]
        {\includegraphics[width=\textwidth,page=2]{6p_twobends}}
    \end{minipage}
    \caption{(a)~A maximal $6$-planar graph in which each face has at most one vertex of degree $5$ (black-colored vertices) and at least two vertices of degree $6$ (gray-colored vertices).
    From Euler's formula for maximal planar graphs, it follows that any graph with this property must have at least $12$ vertices of degree $5$.
    Hence, this is the smallest graph with this property.
    (b)~Illustration of the recursive construction.}
    \label{fig:6p_twobends_construction}
\end{figure}

\section{A Sample Octilinear Drawing with at most 1 bend per edge}
\label{sec:sample}

\begin{figure}[h!]
    \centering
    \includegraphics[width=\textwidth]{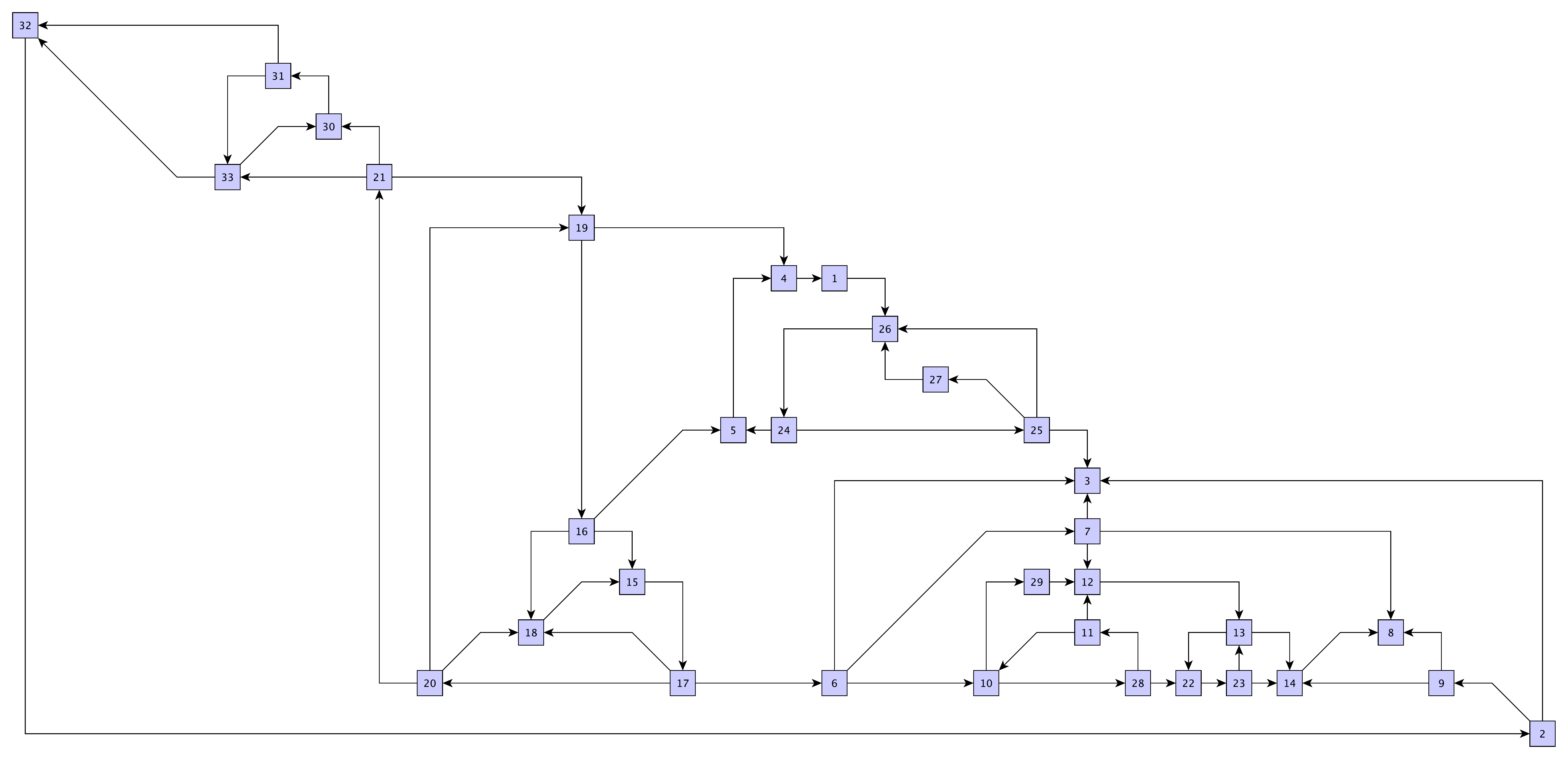}
    \caption{Example layout of a biconnected 4-planar graph. Vertices are labeled by their indices. The
    corresponding SPQR-tree $\mathcal{T}$ has been rooted at a Q-node
    representing the edge $(v_{32}, v_{2})$ with the only child being an
    S-node whose skeleton is the simple cycle $v_{32}, v_{21}, v_{2}$.
    It has two R-nodes as children, a smaller in the upper left (with
    poles $\{ v_{32}, v_{21} \}$) and a larger one (with poles $ \{
    v_{21}, v_{2} \}$) occupying most of the drawing area. The latter
    one contains two smaller S-nodes (with poles $\{ v_{10}, v_{12} \}$
    and $\{ v_{4}, v_{26} \}$) and a P-node (with poles $ \{
    v_{26},v_{25} \}$) that has two children. One of them being an
    $(s,t)$-edge, the other one an S-node.}
    \label{fig:4p_example_large}
\end{figure}

\section{Conclusions}
\label{sec:conclusions}

Motivated by the fundamental role of planar octilinear
drawings in map schematization, we presented algorithms for their construction
with at most one bend per edge for 4- and 5-planar graphs.
We also improved the known bounds on the required number of slopes
for $4$- and $5$-planar drawings from $8$ and $10$, resp.
(\cite{KPP13}) to $4$. Our work raises several open problems:

\begin{itemize}
\item Is it possible to construct planar octilinear drawings of
4-planar (5-planar, resp.) graphs with at most one bend per edge in
$o(n^3)$ (polynomial, resp.) area?
\item Does any triangle-free 6-planar graph admit a planar
octilinear drawing with at most one bend per edge?
\item What is the complexity to determine whether a $6$-planar graph
admits a planar octilinear drawing with at most one bend per edge?
\item What is the number of necessary slopes for bendless drawings
of $4$-planar graphs?
\end{itemize}

\bibliographystyle{abbrv}
\bibliography{references}

\end{document}